\theoremstyle{plain}
\newtheorem{proposition}{\sffamily \bfseries Proposition}[section]
\theoremstyle{definition}
\newtheorem{definition}{\noindent \sffamily \bfseries Definition}[section]
\theoremstyle{remark}
\newtheorem{notation}{\noindent \sffamily \bfseries Notation}[section]
\newcommand{\sectionref}[1]{Section \ref{#1}}
\newcommand{\subsectionref}[1]{Subsection \ref{#1}}
\newcommand{\figureref}[1]{Figure \ref{#1}}
\newcommand{\tableref}[1]{Table \ref{#1}}
\newcommand{\definitionref}[1]{Definition \ref{#1}}
\newcommand{\appendixref}[1]{Appendix \ref{#1}}
\newcommand{\propositionref}[1]{Proposition \ref{#1}}
\newcommand{\Ef}{\mathbb{E}}                            
\newcommand{\Rf}{\mathbb{R}}                            
\newcommand{\Pf}{\mathbb{P}}                            
\newcommand{\Var}{\mathbb{V}ar}                         
\title{Deep Hedging of Green PPAs in Electricity Markets}
\author{Biegler-König, Richard\thanks{Steag Iqony Group, richard.biegler-koenig$@$iqony.energy} ~ and Oeltz, Daniel\thanks{Fraunhofer SCAI, daniel.oeltz$@$scai.fraunhofer.de}}
\begin{document}
\maketitle

\begin{abstract}
In power markets, Green Power Purchase Agreements have become an important contractual tool of the energy transition from fossil fuels to renewable sources such as wind or solar radiation. Trading Green PPAs exposes agents to price risks and weather risks. Also, developed electricity markets feature the so-called \emph{cannibalisation effect}: large infeeds induce low prices and vice versa. As weather is a non-tradable entity the question arises how to hedge and risk-manage in this highly incomplete setting. We propose a "deep hedging" framework utilising machine learning methods to construct hedging strategies. The resulting strategies outperform static and dynamic benchmark strategies with respect to different risk measures. 
\end{abstract}

\section{Introduction}
\label{section_introduction}
    The transition of the electricity production from fossil fuels to renewable sources is one of the most important tasks of the present and future. State financed feed-in tariffs have helped to kick-off this transition. Today though, the production from renewable sources is more and more brought to liberalised electricity markets using so-called \emph{Green Power Purchase Agreements} (PPA).
    
    PPAs are contracts that sell the production of a generating asset at a fixed price. PPAs are not new, they have helped to secure the large investment sums needed when building conventional power plants for decades. This is achieved by guaranteeing a secure stream of income to the owner of the asset. At the same time the buyer is exposed to market risk. A PPA is called "Green" when its underlying generating asset uses a renewable source.
    
    Due to the fact that electricity is not storable its production has to match consumption at every point in time. This results in huge variations in prices. Some of those variations are driven by large differences in demand: high demand during the week, low demand on the weekend; high demand in winter (heating) and summer (cooling), low demand in spring. Also, (non-) availability of plants can lead to huge price jumps. Finally, in markets with a lot of installed green energy (in particular windmills or solar farms), the weather has a huge impact on prices. This impact is called \emph{cannibalisation effect} as it results in very low prices when it is sunny and windy whereas it leads to high prices when solar radiation or wind are absent.
    
    Trading a Green PPA exposes us to these price fluctuations, but, as seen above, our risk profile also depends on the amount of electricity produced by the underlying asset. Electricity markets allow, to a certain degree, to hedge against price risks as they feature liquidity in swap-like derivatives delivering power at a fixed price over future periods. There is no market to trade the quantity risk though, that is, the weather. Thus, we find ourselves in a highly incomplete market and all the canonical strategies, such as delta hedging, are just one out of a continuum of possible strategies. Even more, it is hard to exactly and completely describe the interplay between weather and power prices.
    
    Finding and implementing a good hedging strategy is of the utmost importance for individual electricity traders but also for the success of the aforementioned energy transition. Our research idea is to apply machine learning methods in combination with some theory from pricing on incomplete markets to overcome this problem. 
    
    In this paper, we set up a stochastic model that captures some of the most important properties of the two risk factors of Green PPAs, weather-induced quantity as well as electricity prices. We then set up a machine learning framework to optimise an agent's individual hedging strategy.
    
    We show that the resulting hedge strategies outperform static and dynamic benchmark strategies. We apply some intuition as well as specific \emph{domain knowledge} about energy markets to \emph{interpret} the applied algorithms (see \citet{garcke_2023} for these classifications).
    
    The structure of the paper is as follows: in \sectionref{section_markets} we will give a very brief summary of the design of electricity markets (we will exemplify in this paper with the German market in mind). Also, we will discuss the main contractual features of Green PPAs citing more literature for further reference. Then, in \sectionref{section_model}, we will motivate and introduce the model framework. \sectionref{section_deepHedging} we will introduce the "deep hedging" framework and how we adjust it to our use case. Finally, in \sectionref{section_experiments}, we will present and discuss empirical results. We conclude with a summary of future research.
    
    Our main sources are: For the structure and class of our proposed model \citet{wagner_2019} and \citet{wagner_2012}. We rely on \citet{buehler_2019} for the idea and general setup of "deep hedging".
    
    
\section{Electricity Markets and Green PPA contracts}
\label{section_markets}

    \paragraph{Electricity markets:} In countries with liberalised electricity markets, power is a traded commodity. The main, outstanding feature of electrical power is that it is not storable. This leads to a multi-layered market design, in which physical restrictions and regularisation increase the closer one gets to the time of delivery. Usually, there is a market trading futures/forwards on constant deliveries of power at a fixed price for years, quarters and months. These then settle against the so-called spot, usually an auction for the individual hours of the following day. There are intraday auctions with a granularity of 15 minutes and also continuous intraday trading. Finally, the grid operator enforces that supply matches demand in various reserve markets.
    
    \paragraph{Green PPAs} A detailed discussion of the various contractual features of Green PPAs is given in \citet[Section 1.2]{biegler_2022}. As mentioned above, PPAs deliver the electricity of a specified asset for a fixed price. In case this asset has a renewable "fuel" we speak of a Green PPA. As such, they are generally linear in both quantity and price, but there are exceptions, in particular when it comes to the treatment of periods with a negative price (we refer to \citet{biegler_2015} for more information on the extraordinary structure of power prices).
    
    One important aspect of Green PPAs is their delivery period. Due to the layered structure of power markets, one can divide this period into three parts that require different techniques when it comes to valuation, hedging and risk-managing. The long-term part is beyond the time point for which futures can be traded (usually three years into the future). Then follows the mid-term for which liquidly traded futures are available and can be used for hedging purposes. This ends with the day-ahead spot auction. With all futures having settled, the intraday period then starts. In our paper, we will look at a simplified setup which features traded futures that settle against a continuous spot.
    
    There is still little research available when it comes to the subject of valuation and hedging of Green PPAs. The authors of \citet{hirth_2013} and \citet{tranberg_2020} conduct comprehensive studies about the cannibalisation effect. \citet{wagner_2012} proposes a model that takes the interplay between weather and power prices into consideration and we will build on his ideas. With a spot model, the authors of \citet{biegler_2022} set up a valuation framework and calculate fair values as well as delta hedge quantities for Green PPAs. They also show that the afore mentioned special treatment of negative hours leads to a non-linear payoff structure and transforms the PPA into an option-style product. Their paper concludes with the comment that, as one faces highly incomplete markets (we refer to \citet[Chapter 4.3]{kiesel_2004} for the role of incompleteness in pricing and hedging), their "risk-neutral" approach is but one of a continuum of possible pricing approaches - and this is exactly the start-off point of our research here.
    
    Finally, we will look at the German electricity market as our main example. The reason for this is twofold: on the one hand, the German market is quite mature with all of the above market-layers firmly established. On the other hand, Germany has seen a large increase in installed capacity of renewable assets which has two main consequences: firstly, Green PPAs are highly relevant. And secondly, wind speed and solar radiation are strongly linked to power prices by the afore mentioned \emph{cannibalisation effect}. The latter poses interesting modelling and hedging challenges that we will further explore in the following.

\section{Model}
\label{section_model}
    In order to optimise a hedging strategy for a PPA contract we will now propose a stochastic model for the evolution of renewable infeeds as well as power forward prices. The resulting forward prices will be the building blocks of the hedging strategy.
    
    We begin by introducing stochastic processes for the infeeds. We will make use of the ideas proposed in \citet{wagner_2012} which ensure that infeeds of the model do not exceed the installed capacity or become negative. This is achieved by modelling the efficiency of the infeeds (as a percentage) rather than the infeeds directly. We also ensure that the model produces infeed paths that are, in expectation, unbiased with respect to our given forecasts.
    
    Secondly, we will set up a process for power prices that depends on the infeeds and thus shows the \emph{cannibalisation effect}. In this, we will follow the ideas of \citet{wagner_2019}. There, the authors propose a HJM-framework (Heath-Jarrow-Morton, see \citet{heath_1992}) for consistent power prices across time. Our power prices will be constructed from three components. Firstly, they will be arbitrage-free with respect to a given forward curve. Secondly, they will exhibit idiosyncratic stochastic risk. Thirdly, they will dependent on the above infeeds (this latter component is called the \emph{structural component} in \citet{wagner_2019}).
    
    Last but not least, in order to approximate better the typical flow of information, we will present a slight modification to our model for which the coupling between infeeds and prices is only activated at discrete time points. By this, we take the availability of new weather forecast into consideration.

    \subsection{The infeed of renewable energies}
    \label{subsection_quantities}
        We begin by defining processes for the efficiencies of the infeed of renewable sources. We will assume there are $n$ such sources. Typically $n$ will be two or three, describing wind (possibly on- and offshore) as well as solar radiation. By efficiency we mean the fraction of the total installed capacity of that technology producing electric power at a certain time.
    
        \begin{notation} [Efficiency of Renewables]
        \label{not_quantities}
            We denote by $Q_i(T)$ the efficiency of the i'th technology in time $T$ as seen from time $t=0$. We have $Q_i(T) \in (0,1)$. Similarly, let $Q_i(t,T)$ denote this efficiency but seen from some time $t$, $0 \leq t \leq T$.
        \end{notation}

        Clearly, we have the relation 
        \begin{align}
        \label{eq:ForwardRelation}
            Q_i(t,T) = \Ef^\mathbb{P}[Q_i(T) | \mathcal{F}_t]
        \end{align} where $\mathbb{P}$ is the real-world measure. In the following, we will understand the process $Q(T)$ to be vector-valued when omitting the technology index $i$. Next, we will propose a concrete process for the efficiencies that we will use in the following.

        \begin{definition} [Model for the Efficiency]
        \label{def_quantities}
            We define the efficiency process for technology $i$ and time $T \geq 0$ as:
            \begin{align*}
                Q_i(T) = \varsigma(X_i(T) + \phi_i(T))
            \end{align*}
            where $\varsigma(\cdot)$ is a sigmoid invertible function such as the logistic or smoothstep functions. Furthermore, $X_i(T)$ is an Ornstein-Uhlenbeck process given by the dynamics:
            \begin{align*}
                dX_i(t) = -\kappa_i X_i(t) dt + \sigma_i dW_i(t)
            \end{align*}
            Here, $\kappa$ and $\sigma$ are the speed of mean-reversion and the volatility and $W_i(t)$ is a Brownian motion.
            The deterministic function $\phi_i(T)$ has to be chosen so that the initial condition 
            \begin{align*}
                Q_i(0,T) = \Ef\left[\varsigma\left(X_i(T) +\phi_i(T) \right) | \mathcal{F}_0 \right]
            \end{align*}
            is fulfilled. This condition ensures that the model infeeds are unbiased with respect to the initial forecast $Q_i(0,T)$. In other words, we "believe" in our forecast.
        \end{definition}
        \begin{figure}
            \centering
            \includegraphics[width=0.49 \textwidth]{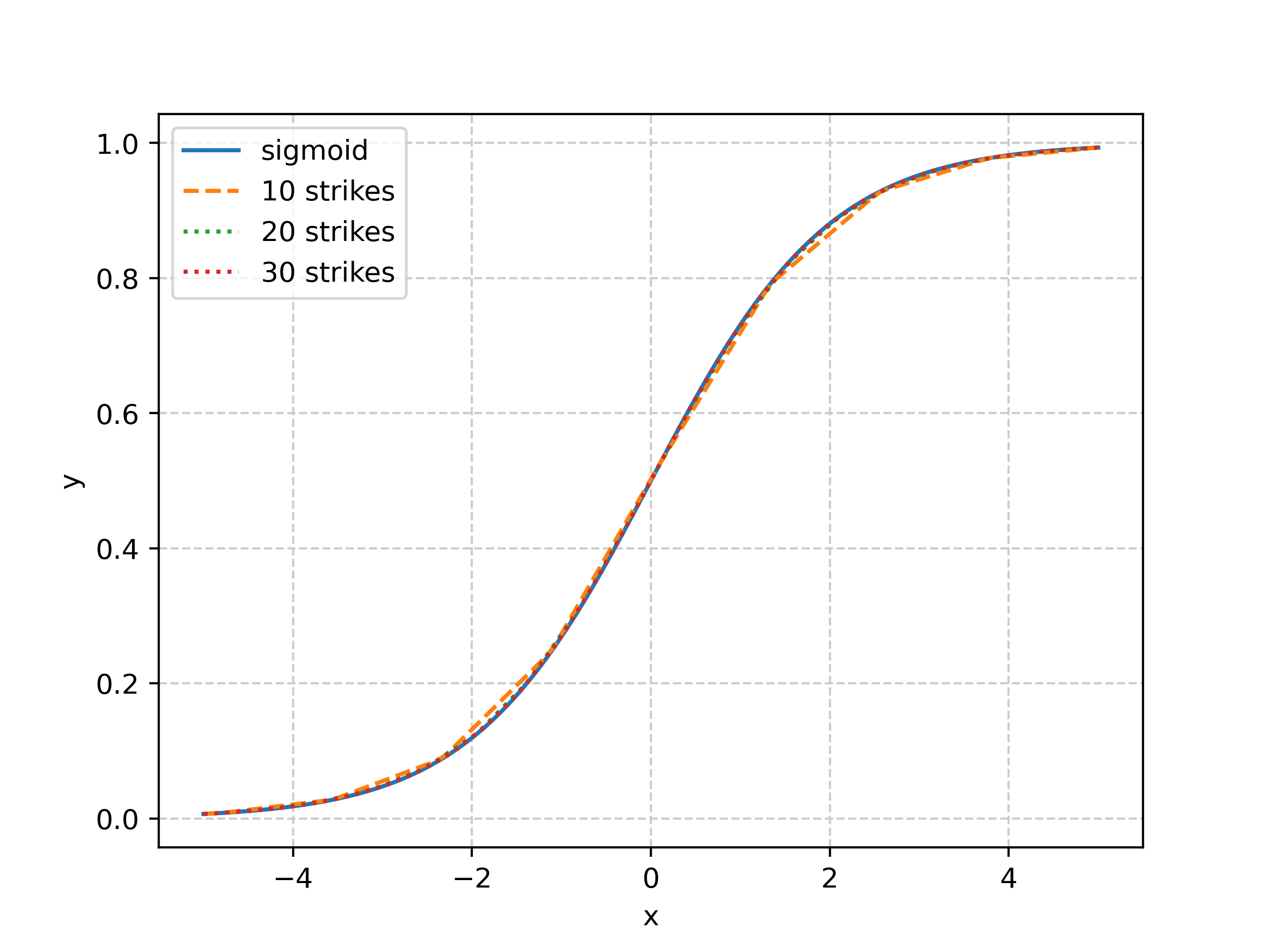}
            \includegraphics[width=0.49\textwidth]{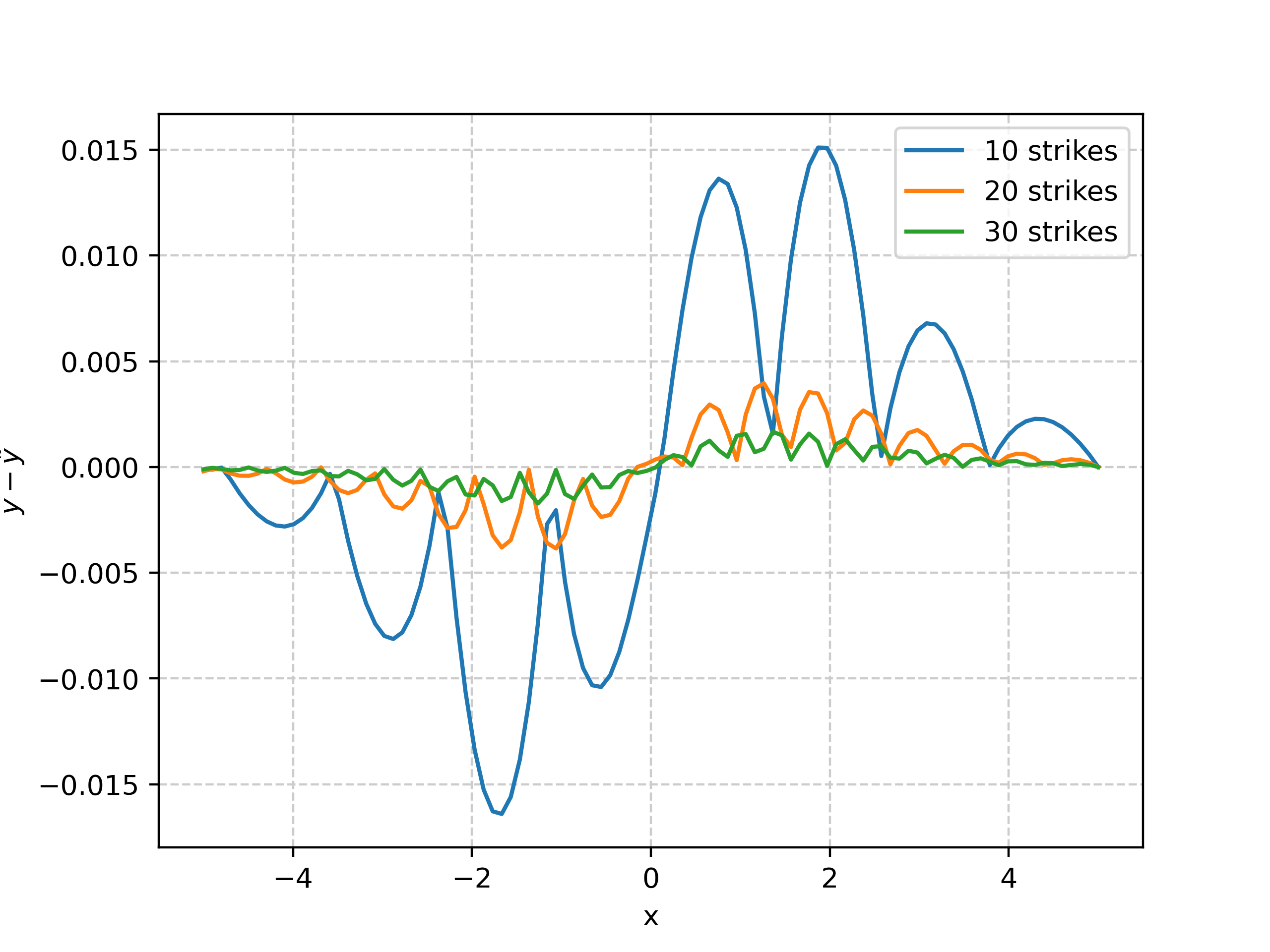}
            \caption{Interpolants and point-wise error approximating the sigmoid function with 10, 20 and 30 points on a uniform grid between -5 and 5.}
            \label{fig:sigmoid_approx}
        \end{figure}
        Since there is no analytic formula to evaluate equation (\ref{eq:ForwardRelation}) (see \citet[p. 7]{wagner_2012} for details) we resort to the following trick: we approximate the sigmoid function by a sum of functions of the form $c_j(x) = \max(0, x - k_j)$, i.e. for $k_j$, $1\leq j\leq N$, we determine weights $w_j$ such that
        \begin{align*}
            \varsigma(x) \approx \sum_{j=1}^N w_j \max(0, x - k_j)
        \end{align*}
        and the sum is just the linear interpolation on the grid $k_j$. To give an impression of the quality of fit we show in \figureref{fig:sigmoid_approx} an example using a uniform grid between -5 and 5 and the resulting linear interpolants and  point-wise errors using 10, 20 and 30 points. 
        
        As to the calculation of the above approximation, we remark that these functions have exactly the form of the payoff of a vanilla call option. With this in mind and the linearity of the expectation we can evaluate the expression (\ref{eq:ForwardRelation}) as a sum of call prices using the following proposition:
        
        \begin{proposition} [Call option on an Ornstein-Uhlenbeck process]
        \label{prop_approx_call}
            The price of a call option $C(t,T)$ on an underlying modelled by an Ornstein-Uhlenbeck process $X(T)$ with time-dependent level $\mu(t)$, constant speed of mean-reversion $\kappa$, constant volatility $\sigma$, strike $K$ is given by:
            \begin{align*}
                C(t,T) &= (g(t,T) - K) ~ \Phi(d) + \bar{\sigma} ~ \phi(d)
            \end{align*}
            where the helper functions are given as:
            \begin{align*}
            	d &= \frac{-\left(X_t e^{-\kappa(T-t)} + \int^T_t e^{-\kappa(T-u)} \kappa \mu(u) du  - K\right)}{\bar{\sigma}(t,T)} \\
                g(t,T) &= X_t e^{-\kappa(T-t)} + \int^T_t e^{-\kappa(T-u)} \kappa \mu(u) du \\
            	\bar{\sigma}(t,T) &= \sigma^2 \frac{1}{2 \kappa} \left(1 - e^{-2\kappa(T-t)}\right)
            \end{align*}
            and $\Phi(\cdot)$ and $\phi(\cdot)$ denote the standard normal distribution and density functions.
        \end{proposition}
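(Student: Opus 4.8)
The plan is to identify $C(t,T)$ with the (undiscounted) normal‑model call price obtained from the Gaussian conditional law of the Ornstein--Uhlenbeck process, so that the whole argument reduces to solving the SDE explicitly and then carrying out a single one‑dimensional Gaussian integral.

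\emph{Step 1 (conditional law of $X(T)$).} First I would solve $dX(s)=\kappa(\mu(s)-X(s))\,ds+\sigma\,dW(s)$ on $[t,T]$ via the integrating factor $e^{\kappa s}$: Itô's formula gives $d\bigl(e^{\kappa s}X(s)\bigr)=e^{\kappa s}\bigl(\kappa\mu(s)\,ds+\sigma\,dW(s)\bigr)$, and integrating from $t$ to $T$ yields
\begin{align*}
    X(T)=X_t e^{-\kappa(T-t)}+\int_t^T e^{-\kappa(T-u)}\kappa\mu(u)\,du+\sigma\int_t^T e^{-\kappa(T-u)}\,dW_u .
\end{align*}
The first two terms are $\mathcal F_t$‑measurable and sum to $g(t,T)$; the stochastic integral has a deterministic integrand, hence is independent of $\mathcal F_t$ and centred Gaussian, with variance $\sigma^2\int_t^T e^{-2\kappa(T-u)}\,du=\tfrac{\sigma^2}{2\kappa}\bigl(1-e^{-2\kappa(T-t)}\bigr)$ by the Itô isometry --- which is $\bar\sigma(t,T)$ as displayed in the statement, read as a conditional variance. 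Therefore, conditionally on $\mathcal F_t$, $X(T)\sim\mathcal N\bigl(g(t,T),\bar\sigma(t,T)\bigr)$.

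\emph{Step 2 (Gaussian integration).} Writing $X(T)=g(t,T)+\sqrt{\bar\sigma(t,T)}\,Z$ with $Z\sim\mathcal N(0,1)$ independent of $\mathcal F_t$, the payoff $(X(T)-K)^+$ is positive precisely on a half‑line of the form $\{Z>d\}$ with $d$ the standardised moneyness appearing in the statement, so that
\begin{align*}
    C(t,T)&=\Ef\bigl[(g(t,T)-K+\sqrt{\bar\sigma(t,T)}\,Z)^+\mid\mathcal F_t\bigr]\\
    &=(g(t,T)-K)\,\Pf(Z>d)+\sqrt{\bar\sigma(t,T)}\int_d^\infty z\,\phi(z)\,dz .
\end{align*}
The first factor is the in‑the‑money probability $\Phi(d)$, and since $-\phi$ is a primitive of $z\mapsto z\phi(z)$ the remaining integral equals $\phi(d)$; combining the two gives $C(t,T)=(g(t,T)-K)\,\Phi(d)+\bar\sigma\,\phi(d)$, matching the claim once the sign placed in $d$ and the reading of $\bar\sigma$ (the displayed quantity versus its square root as the volatility multiplier) are aligned with the statement's conventions.

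\emph{Where the work is.} I do not expect a genuine obstacle: this is essentially the Bachelier formula for a time‑inhomogeneous OU underlying with no discounting. The only points needing care are carrying the time‑dependent level $\mu(\cdot)$ correctly inside $g(t,T)$ through the integrating factor, invoking the Itô isometry on the deterministic integrand to obtain the closed‑form conditional variance, and keeping the sign and normalisation conventions of $d$ and $\bar\sigma$ consistent so that the $\Phi$‑term is indeed the in‑the‑money probability. For the use made of the proposition in the text one then specialises to $\mu\equiv 0$ and $K=k_j$ (after the shift by $\phi_i(T)$) and sums the resulting call prices with the interpolation weights $w_j$.
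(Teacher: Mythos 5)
Your proposal follows essentially the same route as the paper's own proof in \appendixref{section_appendix_ouCall}: solve the OU SDE with an integrating factor, identify the conditional law $X(T)\mid\mathcal F_t\sim\mathcal N\left(g(t,T),\bar\sigma(t,T)\right)$ via the It\^o isometry, and evaluate the resulting Bachelier-type expectation by splitting off the strike term and using $\int_d^\infty z\,\phi(z)\,dz=\phi(d)$. The conventions you flag at the end are genuinely worth flagging rather than mere bookkeeping: the paper's $\bar\sigma(t,T)$ is defined as a variance but used as a standard deviation, and the sign of $d$ in the proposition statement is the opposite of the one actually derived in the appendix (one needs $d=(g(t,T)-K)/\bar\sigma$ for the term $(g(t,T)-K)\,\Phi(d)$ to be correct).
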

        \begin{proof}
            The proof is given in \appendixref{section_appendix_ouCall}.
        \end{proof}

    \subsection{Power Prices}
    \label{subsection_prices}
        As mentioned before, we construct power prices along the lines of the model framework proposed in \citet{wagner_2019}. Clearly, we want our price model to be free of arbitrage with respect to the observed forward curve, which, in the case of the underlying power is called the \emph{HPFC} (for more information about the special properties and construction of the \emph{hourly-price-forward-curve} we refer to \citet{biegler_2015}).

        \begin{notation} [HPFC]
        \label{not_hpfc}
            We denote by $f(T)$ the power forward curve as seen from $t=0$. As before, we denote by $f(t,T)$ the forward curve as seen from some time $t \geq 0$, that is: $f(t,T) = \Ef^\mathbb{Q}[f(T) | \mathcal{F}_t]$.
        \end{notation}

        Next, we propose definitions for the structural component and the idiosyncratic risk component of the power price.

        \begin{definition} [Idiosyncratic Component]
        \label{def_idiosyncratic}
            We model the idiosyncratic risk of the power price using process $X^P(T)$. We define this stochastic process to be another Ornstein-Uhlenbeck process given by the dynamics:
            \begin{align*}
                dX^P(t) = -\kappa^P dt + \sigma^P dW^P(t)
            \end{align*}
            for speed of mean-reversion $\kappa^P$, volatility $\sigma^P$ and Brownian motion $W^P(t)$. As usual, we also define a conditional version and denote it by $X^P(t,T)$.
        \end{definition}
        
        \begin{definition} [Structural Component]
        \label{def_structural}
            For the structural component we define a function $g(Q(T))$ linking infeeds and prices. In this paper we make use of a so-called \emph{supply-curve} $S$ and define:
            \begin{align*}
                g(Q(T)) = S(1 - w^TQ(T))
            \end{align*}
            where $w$ is a vector of weights describing the ratio of the installed capacity of the different sources of renewable energy when compared to the total installed capacity. The construction ensures that the argument of function $S$ is positive and smaller than one.
        \end{definition}
        
        The intuition behind the supply curve is as follows: we have chosen as weights the fraction of the installed capacity of the corresponding renewable technology. The difference between $1$ and the sum of those weighted efficiencies then gives us the amount of electricity that has to be produced by conventional plants. And those plants enter the spot-auction at their (non-zero) marginal costs and thus are the main driver of prices. For the sake of computational simplicity we will use the identity function as the supply curve in the remainder of this paper. 

        We will now combine the three components as illustrated in \citet{wagner_2019}.
        
        \begin{definition} [Model for the Power Price]
        \label{def_powerPrice}
            We define the power forward price in time $t$ for future time $T$ in geometric fashion as:
            \begin{align*}
                f(t,T) = f(T) ~ \Ef\left[\frac{X^P(T)}{\Ef[X^P(T)]} | \mathcal{F}_t \right] ~ \Ef\left[\frac{g(Q(T)}{\Ef[g(Q(T))]} | \mathcal{F}_t\right]
            \end{align*}
        \end{definition}
        
        The no-arbitrage property of \definitionref{def_powerPrice} is easily checked by setting $t=0$.
        
    \subsection{Time grid of weather forecasts}
    \label{subsection_timeFrame}
        Usually, a number of major weather forecasts is calculated and published daily mostly by state-owned research facilities. Minor updates are being extrapolated from them by commercial providers who also usually transform the weather variables into infeeds. To account for the above mentioned main runs of the weather forecasts we assume they arrive at discrete time points $t_i$. We then introduce a new time grid as:
        \begin{align*}
            t_{-} := \max_{i, t_i<t} \{ t_i \}
        \end{align*}
        This leads to the adjusted model for the power price as defined before as follows:
        
        \begin{definition} [Model for the Power Price with arrival of forecasts]
        \label{def_powerPrice_forecasts}
            We define the power forward price with adjustment to the arrival of new weather forecasts as:
            \begin{align*}
                f(t,T) = f(T) ~ \Ef\left[\frac{X^P(T)}{\Ef[X^P(T)]} | \mathcal{F}_t \right] ~ \Ef\left[\frac{g(Q(T)}{\Ef[g(Q(T))]} | \mathcal{F}_{t_{-}}\right]
            \end{align*}
            This is a slightly modified version of \definitionref{def_powerPrice} with conditioning of the structural component to the arrival times of forecasts.
        \end{definition}
    
    \begin{figure}
        \centering
        \includegraphics[width=0.99 \textwidth]{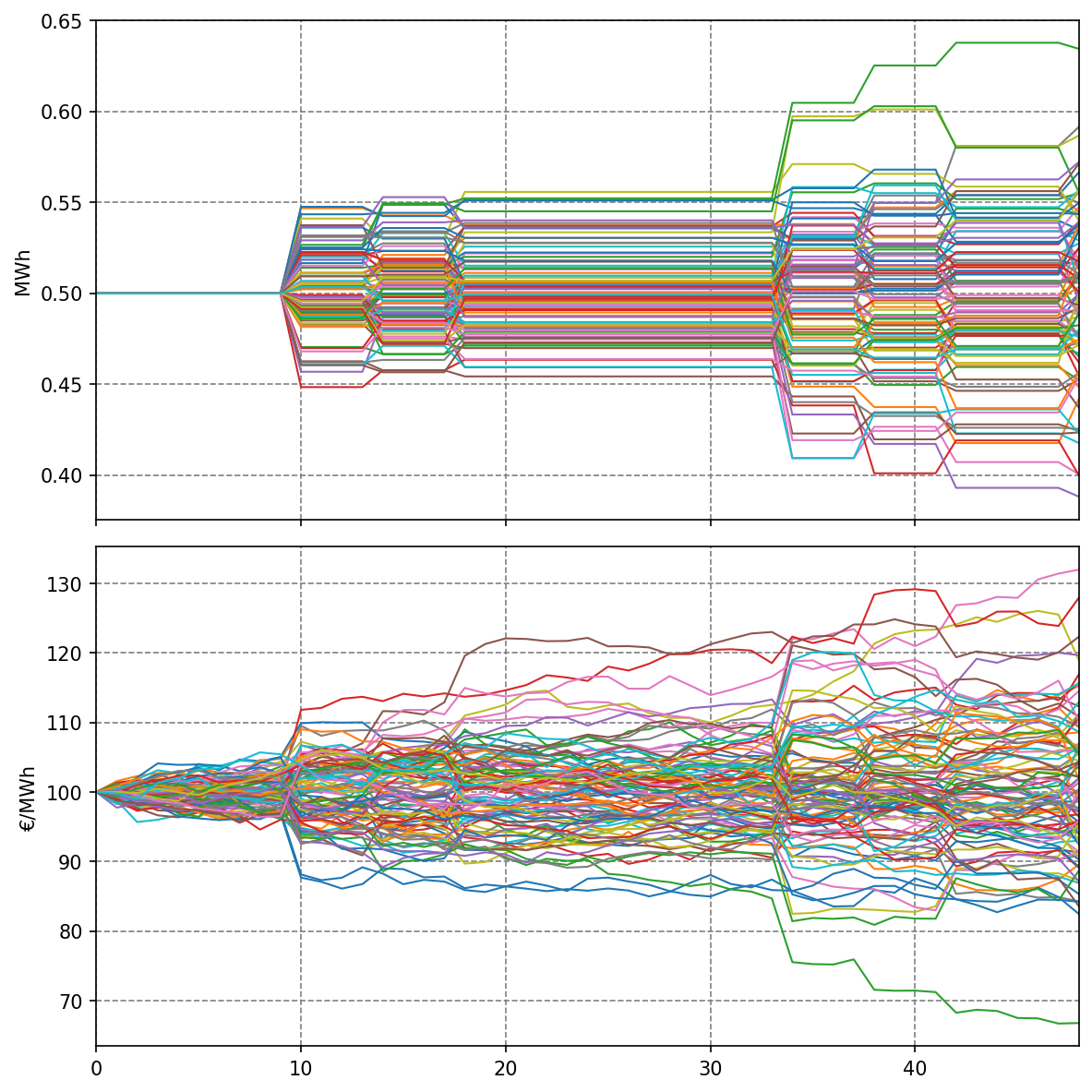}
        \caption{Paths of the stochastic model introduced in \sectionref{section_model}, infeed above (stylised wind onshore), power price below. Parameters used were: $t=0, T=48$, forecast $Q_1(0, T) = 0.5$, forward price $f(0, T) = 100$. Arrival of major weather forecasts at $t \in \{10, 14, 18, 34, 38, 42\}$. We have set $\varsigma(x) = \frac{1}{1+e^{-t}}$. Parameters of the stochastic processes are: $\kappa_1 = \kappa_2 = 0.1$, $\sigma_1 = \sigma_2 = 3.0$, $\kappa^P = 0.5$, $\sigma^P = 0.8$. The technology weights are $w_1 = 0.8, w_2 = 0.2$.}
        \label{fig:paths}
    \end{figure}
    
    \figureref{fig:paths} illustrates sample paths of the model. One can clearly see the influence of the infeeds on the price at the times of arrival of new weather forecasts. Furthermore, one sees the time-continuous idiosyncratic risk component of the prices.

\section{Deep Hedging}
\label{section_deepHedging}
    
    Forwards and futures traded on electricity provide the toolbox to hedge against price risks. The second source of risk in a PPA is the uncertain quantity that depends on the weather. This risk component is not a traded commodity on the financial market and thus cannot be replicated. Thus, we are facing a highly incomplete market. We are referring to \citet[Chapter 4.3]{kiesel_2004} for a discussion of incompleteness and the \emph{Fundamental Theorem of Asset Pricing}.
    
    In an incomplete market, we will need some criterion to chose from the continuum of possible hedging strategies and then identify their corresponding fair values. This criterion must and will depend on the trader's individual risk appetite and thus the resulting price will be called an \emph{indifference price}. There are two main types of this criterion in the literature: firstly, making use of utility functions and secondly, applying risk measures. We will briefly describe both of these types but our experiments will make use of risk measures only. In introducing the concepts and notation we will follow \citet[Section 3]{buehler_2019}. We set up a time horizon $T$ and discrete trading dates $0 = t_0 < t_1 < \ldots < t_n = T$. We denote the derivative at hand by the real random number $Z$. The portfolio of hedge instruments will be denoted by $S(t)\in\mathbb{R}^{n_S}$. $X_t$ is a $n_X$-dimensional random variable describing the state of the system at time $t$. A hedging strategy is a function $\delta(.):\mathbb{R}^{n_X}\mapsto \mathbb{R}^{n_S}$ and  $\mathcal{H}$ is the set of all valid hedging strategies. Note that for a complete market this hedging strategy is equal to the well-known $\Delta$-hedge (cf. \citet{biegler_2020} for hedging strategies in energy markets). Furthermore, we will assume trading is free of charge.
    
    \paragraph{Utility function} Let $U(Y)$ be a utility function of a position $Y$. We set up the optimisation problem to find an optimal hedging strategy for $Z$ using a vector of hedge instruments $S$:
    
    \begin{align*}
        \sup_{\delta \in \mathcal{H}} \Ef\left[U\left(Z + \sum^{n-1}_{k=0} \delta(X_k)\cdot (S_{k+1} - S_k)\right)\right]
    \end{align*}
    
    With this, the \emph{indifference price} of derivative $Z$ under the utility function $U(Y)$ will be denoted by $p^U_0$. It will be exactly the amount of money that, together with derivative $Z$ and the hedge, gives the same utility to the investor as not investing in $Z$ at all. This is:
    
    \begin{align*}
        \sup_{\delta \in \mathcal{H}} \Ef\left[U\left(p^U_0\! + Z\! + \sum^{n-1}_{k=0} \delta(X_k) (S_{k+1}\! -\! S_k)\right)\right] = \sup_{\delta \in \mathcal{H}} \Ef\left[U\left(\sum^{n-1}_{k=0} \delta(X_k) (S_{k+1}\! -\! S_k)\right)\right]
    \end{align*}
    
    For example, using the exponential utility so that $U(x) = -e^{-\lambda x}$ for some $x \in \Rf$ (c.f. \cite{henderson_2004}) allows to further simplify the above expression and write:
    \begin{definition}[Indifference price (utility function)]
        For a derivative $Z$ we define the \emph{indifference price} under the exponential utility function as:
        \begin{align*}
            p^U_0 &= \frac{1}{\lambda} \ln\left(
            \frac{\sup_{\delta \in \mathcal{H}} \Ef\left[\exp\left(-\lambda \left(Z + \sum^{n-1}_{k=0} \delta(X_k)\cdot (S_{k+1} - S_k)\right)\right)\right]}
            {\sup_{\delta \in \mathcal{H}} \Ef\left[\exp\left(- \lambda \left(\sum^{n-1}_{k=0} \delta(X_k)\cdot (S_{k+1} - S_k)\right)\right)\right]}\right)
        \end{align*}
    \end{definition}
    
    \paragraph{Risk measure} Now, let $\rho(Y)$ be a convex risk measure applied to a position $Y$. We formulate finding the hedge strategy as an optimisation problem:
    
    \begin{align*}
        \inf_{\delta \in \mathcal{H}} \rho\left(Z + \sum^{n-1}_{k=0} \delta(X_k)\cdot (S_{k+1} - S_k)\right)
    \end{align*}
    
    With this, the \emph{indifference price} of derivative $Z$ under the risk measure $\rho(Y)$ will be denoted by $p^\rho_0$. It will be exactly the amount of money that will make the investor indifferent between investing in $Z$ and the hedge as compared to hedging a zero position:
    
    \begin{align*}
        \rho(0) &= \inf_{\delta \in \mathcal{H}} \rho\left(p^\rho_0 + Z + \sum^{n-1}_{k=0} \delta(X_k)\cdot (S_{k+1} - S_k)\right)
    \end{align*}
    
    As our model does not enable statistical arbitrage (see the discussion in \subsectionref{subsection_parameters}) the not-investing strategy has a zero value $\rho(0) = 0$. Making use of the cash-invariance of $\rho(Y)$ this leads to:
    
    \begin{definition}[Indifference price (risk measure)]
        For a derivative $Z$ we define the \emph{indifference price} under the risk measure $\rho(\cdot)$ as:
        \begin{align*}
            p^\rho_0 &= \inf_{\delta \in \mathcal{H}} \rho\left(Z + \sum^{n-1}_{k=0} \delta(X_k)\cdot (S_{k+1} - S_k)\right)
        \end{align*}
    \end{definition}
    
    As we have seen so far, pricing and hedging with respect to a fixed risk measure or utility function can be formulated as an optimization problem over some space $\mathcal{H}$ of all valid hedging strategies. To solve this problem numerically, we have to come up with a certain subset $\mathcal{H}_\Theta \subset \mathcal{H}$ whose strategies $\delta_\Theta $ depend on a finite number of parameters $\Theta \in\mathbb{R}^N$ for some $N>0$. In the following we use a neural network where the parameter $\Theta$ denotes the network weights and biases as parameterisation for possible strategies, which ends up in the approach that has been introduced in \citet{buehler_2019} under the name of Deep Hedging. All of the above carries over to this finite dimensional case, e.g. we may find the optimal strategy with respect to a utility function as the strategy with optimal weights $\Theta^\star$ by 
    \[
    \inf_{\theta\in\mathbb{R^N}}\rho\left(Z + \sum^{n-1}_{k=0} \delta_{\Theta}(X_k)\cdot (S_{k+1} - S_k)\right)
    \]
    
    Our numerical experiments will be conducted applying the approach using risk measures. We choose the \emph{Expected Shortfall} (ES): 
    \begin{definition}[Expected Shortfall]
    \label{def_ES}
        The Expected Shortfall of a position $Y$ with level $\alpha$ is defined as:
        \begin{align*}
            ES_\alpha(Y) = \Ef[-Y | Y \leq - VaR_\alpha(Y)]
        \end{align*}
        where $VaR_\alpha(Y)$ is the \emph{Value-at-Risk} of the position $Y$.
    \end{definition}
    
    We refer the reader to \citet{artzner_1999} for further information on both of these risk measures. With the ES we look at the expectation of the losses suffered in the $\alpha \%$-worst cases. In contrast to the VaR, ES is indeed a coherent risk measure (it is sub-additive).

\section{Numerical Experiments}
\label{section_experiments}
In this section we investigate the performance of the Deep Hedging approach in comparison to a simple (dynamic) volume hedge. For illustration purposes we consider a Green PPA on wind onshore with delivery in just one hour. More complex examples with more realistic delivery periods will be subject of future work.

\subsection{Setup}
\label{subsection_parameters}
\paragraph{Model Parameters}

\begin{figure}
    \centering
    \includegraphics[width=0.32\textwidth]{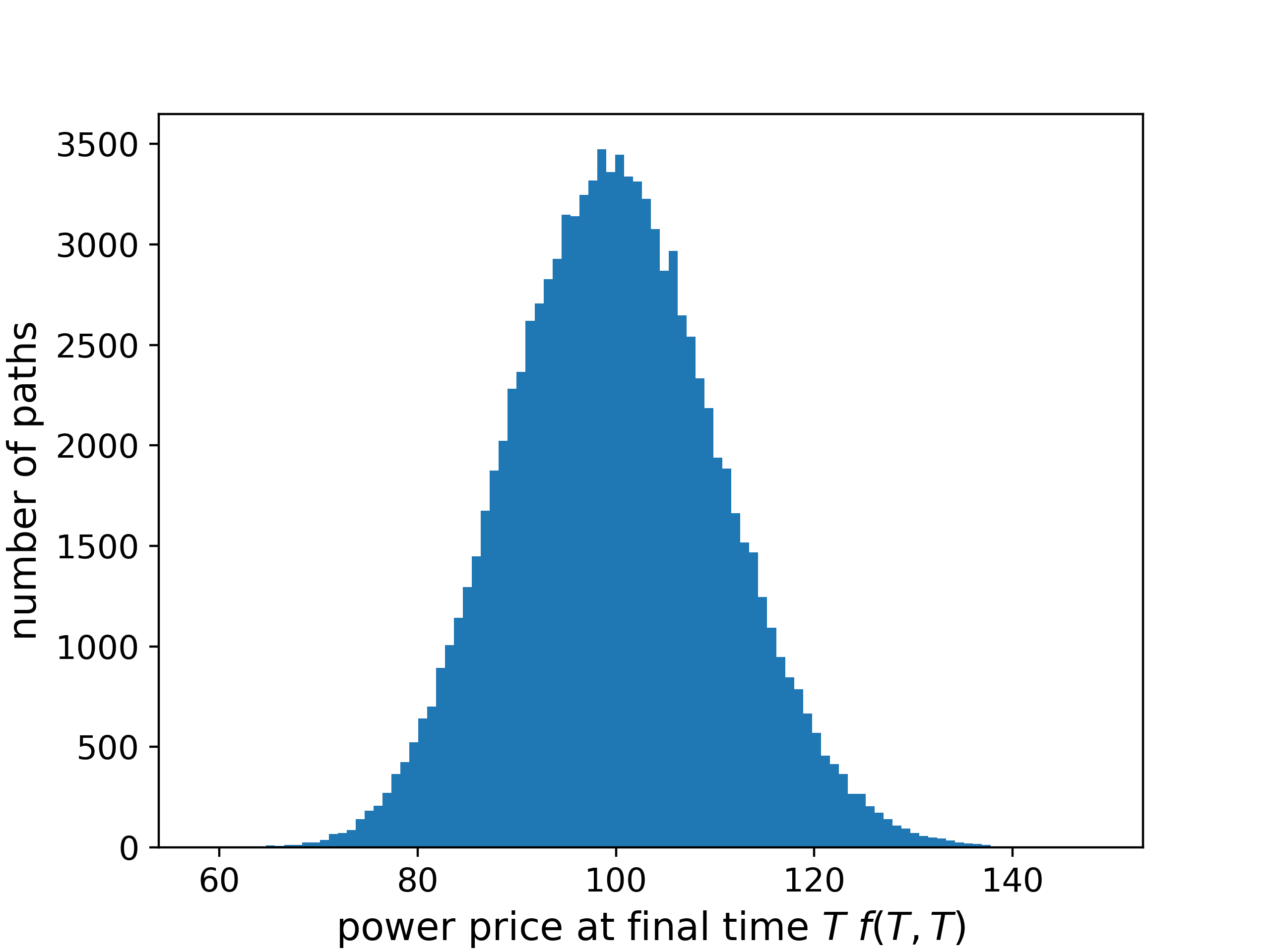}
    \includegraphics[width=0.32\textwidth]{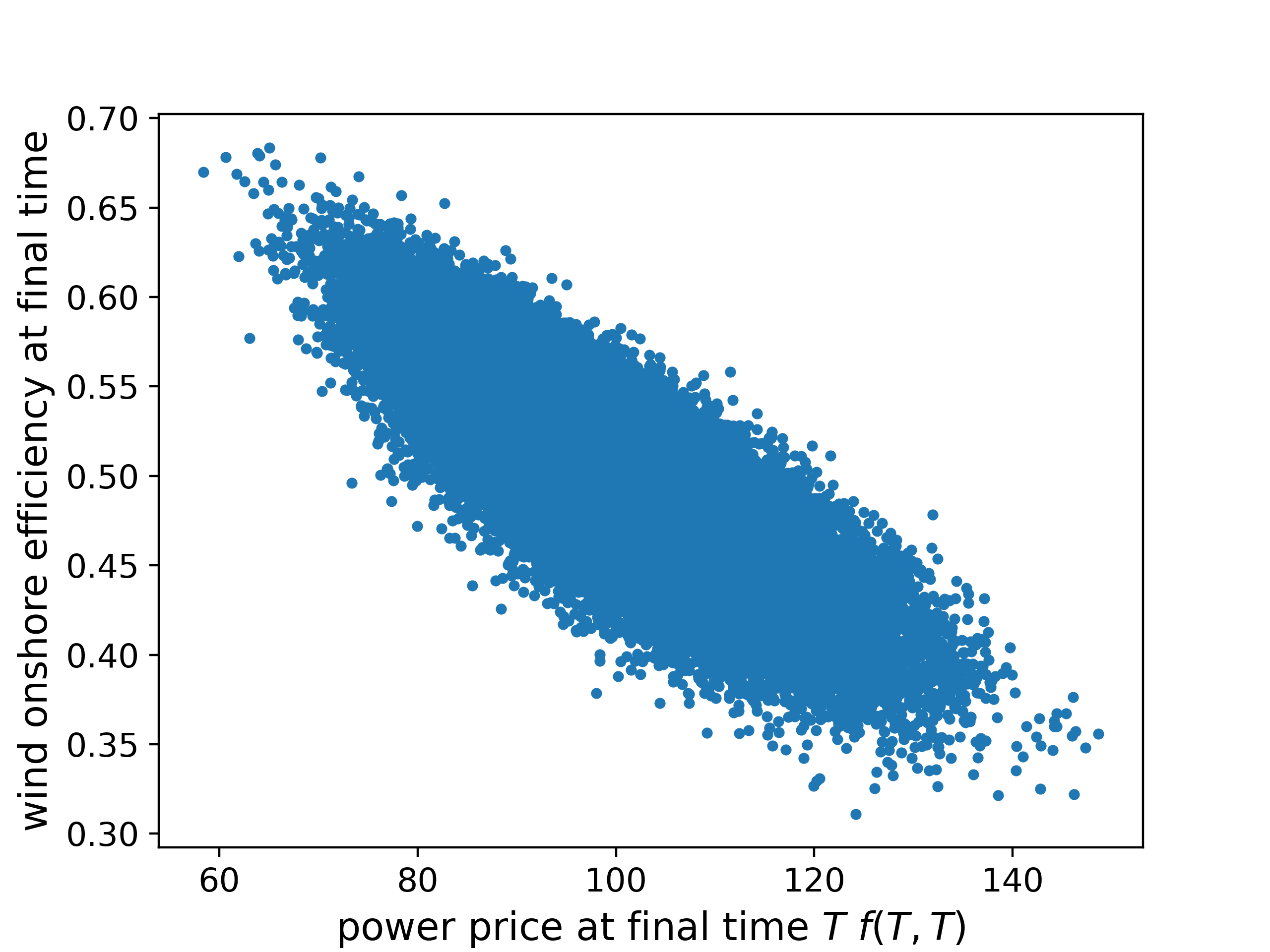}
    \includegraphics[width=0.32\textwidth]{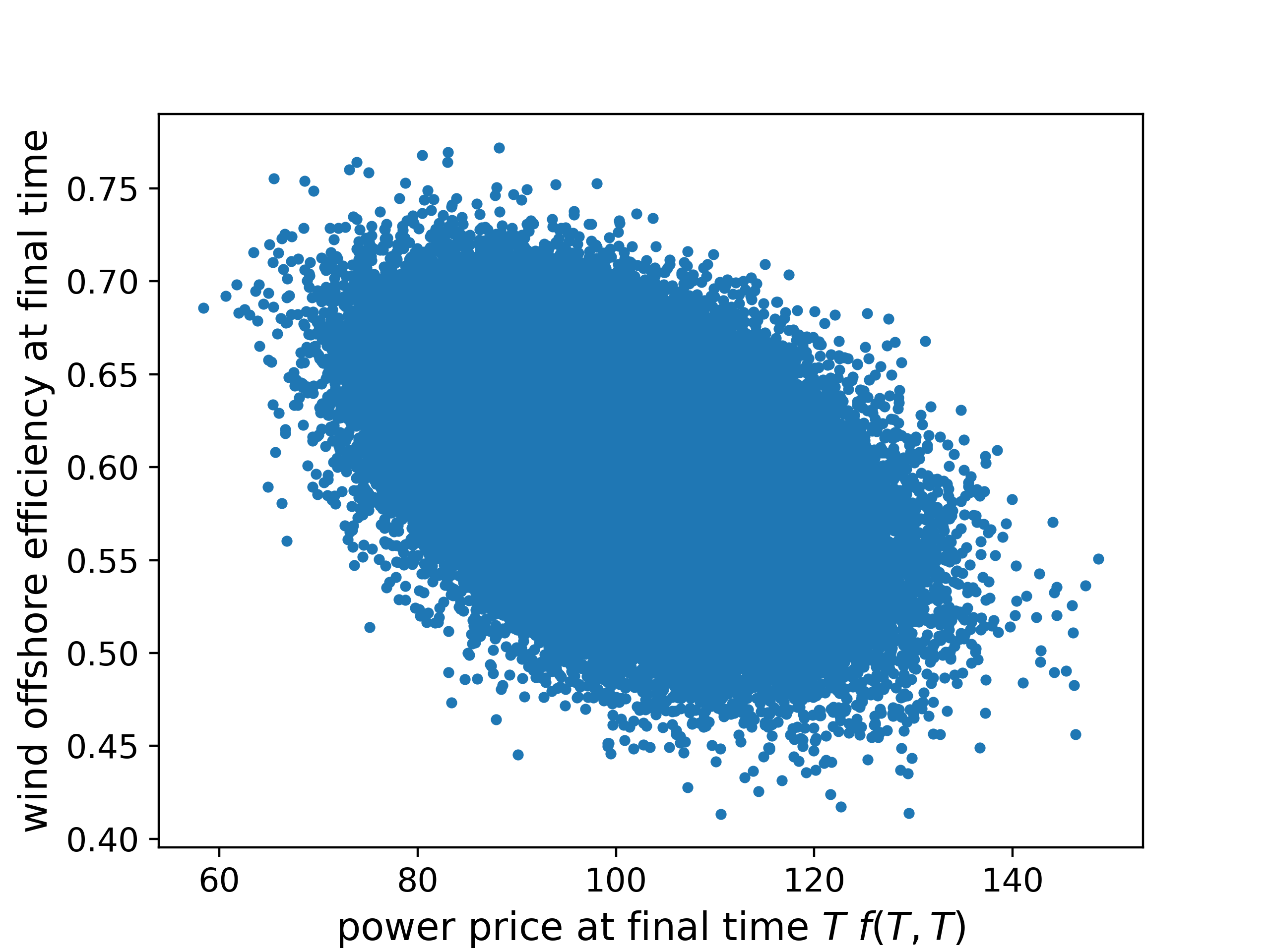}
    
    \includegraphics[width=0.32\textwidth]{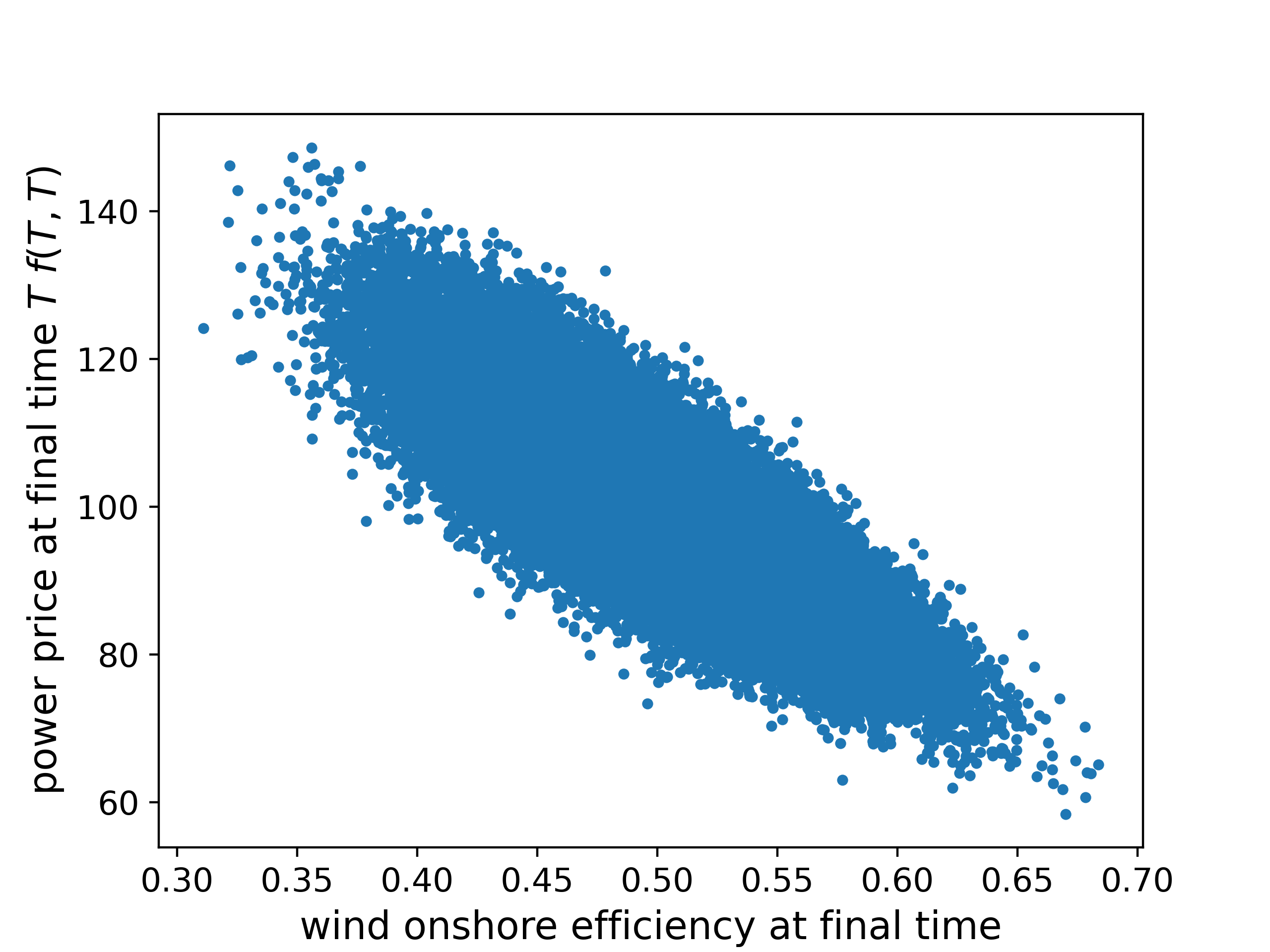}
    \includegraphics[width=0.32\textwidth]{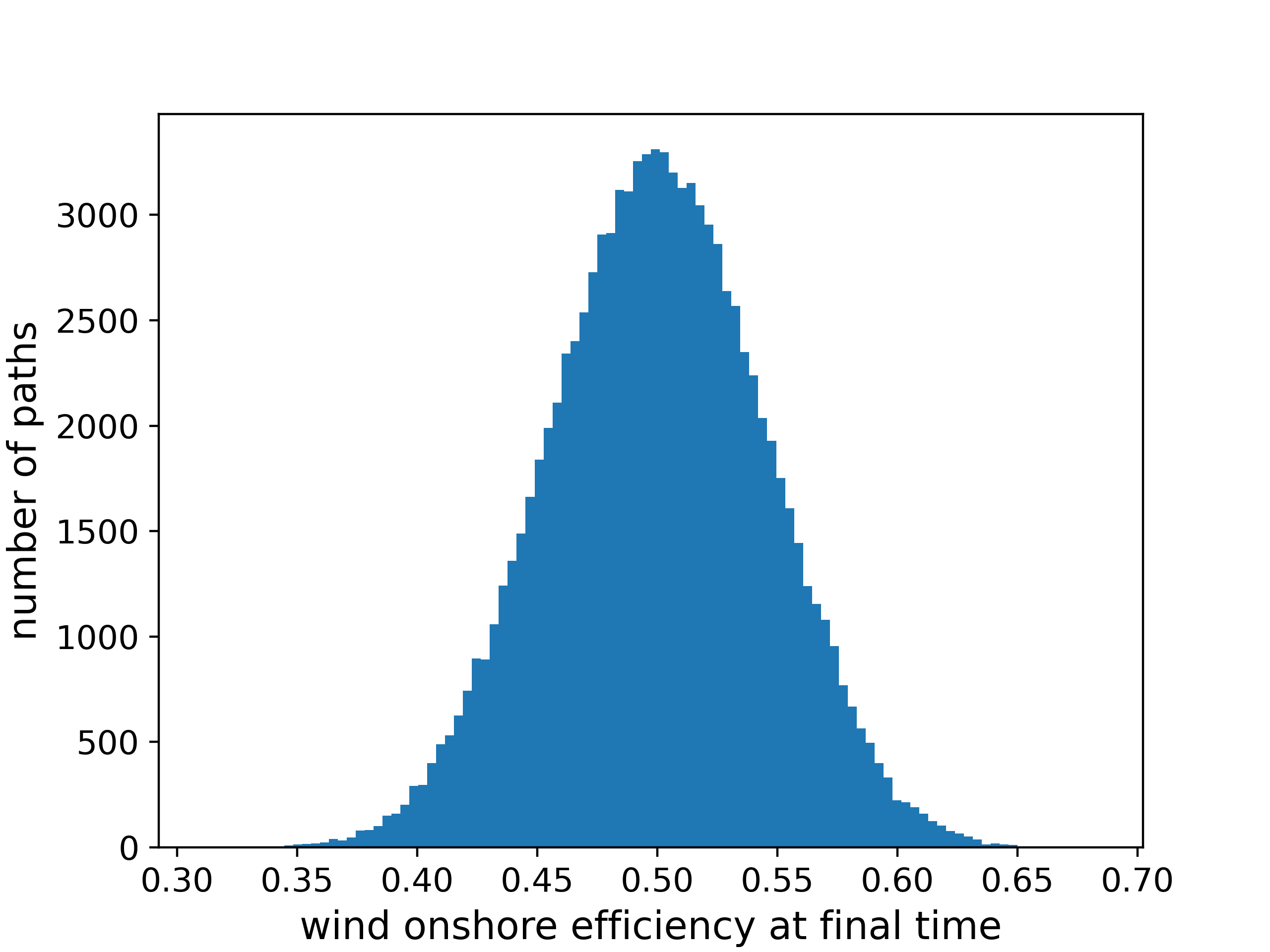}
    \includegraphics[width=0.32\textwidth]{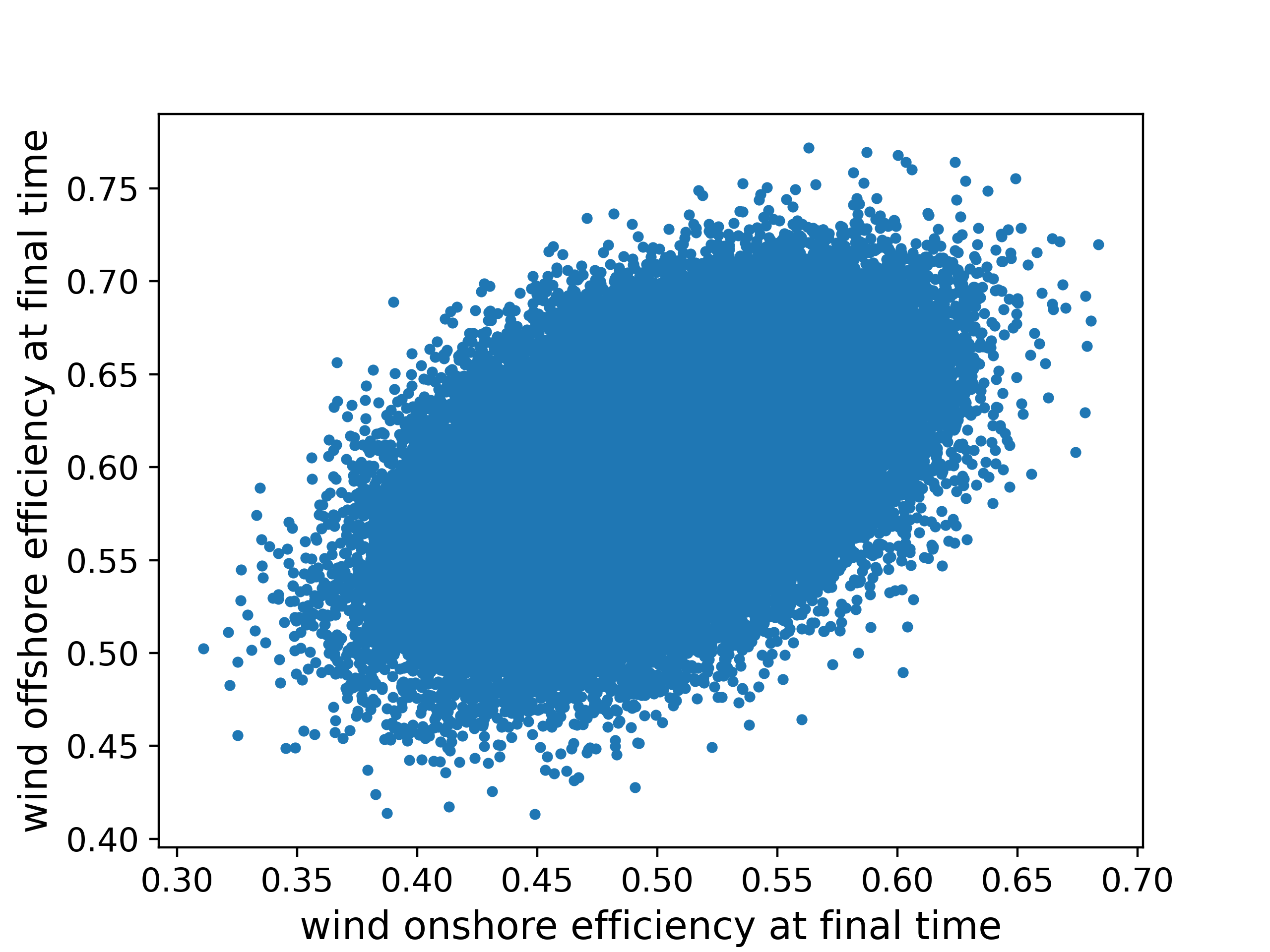}
    
    \includegraphics[width=0.32\textwidth]{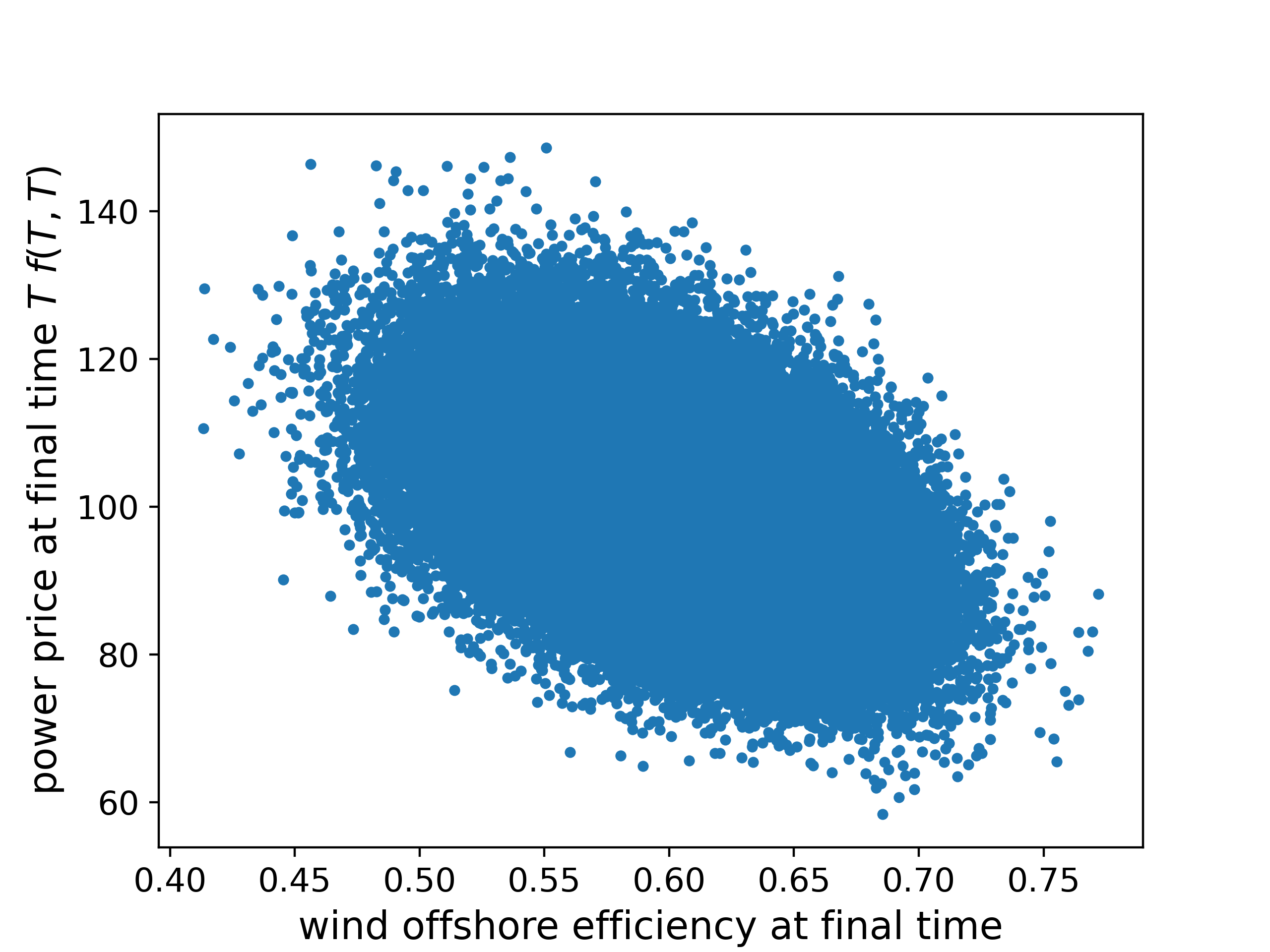}
    \includegraphics[width=0.32\textwidth]{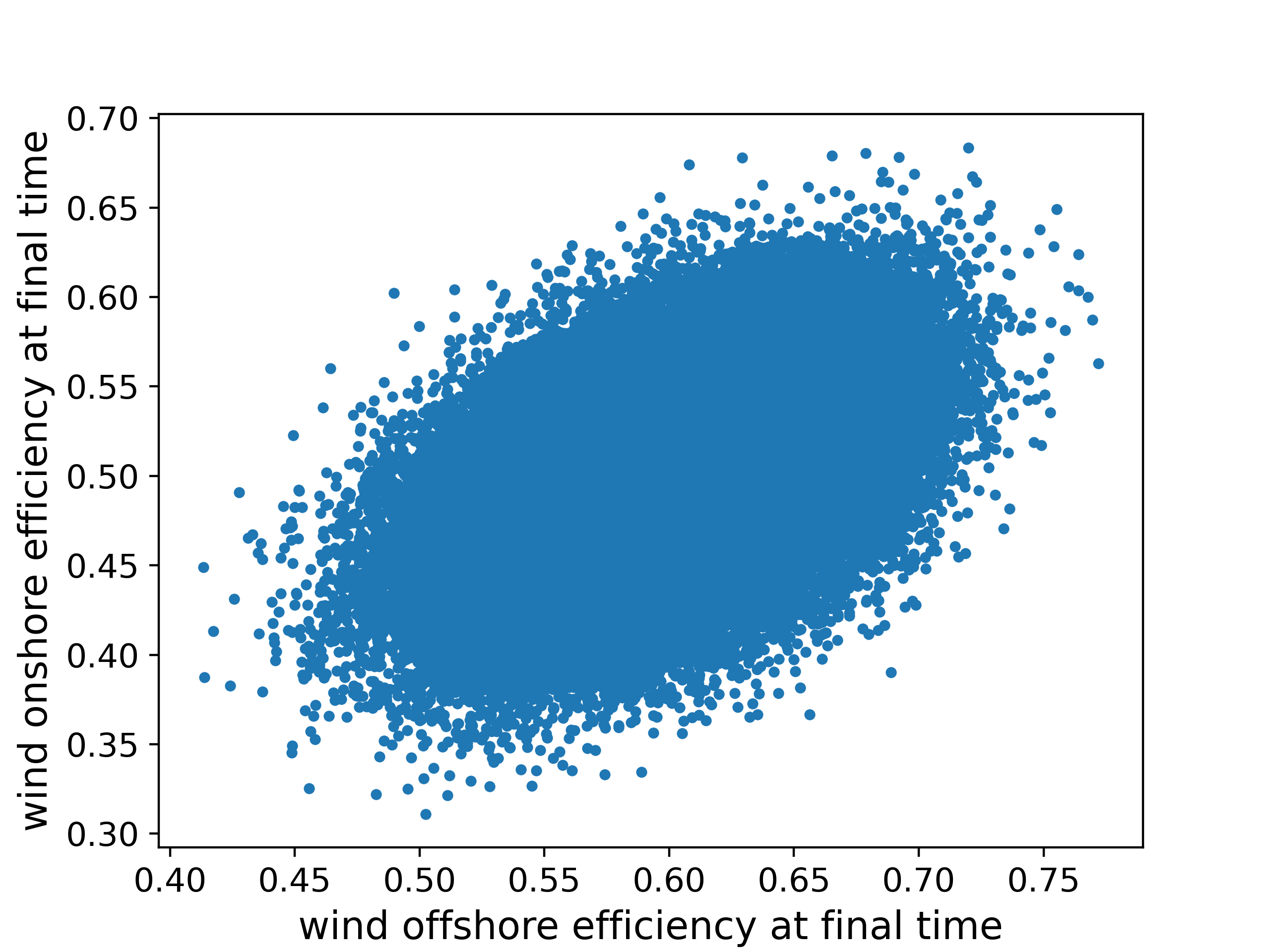}
    \includegraphics[width=0.32\textwidth]{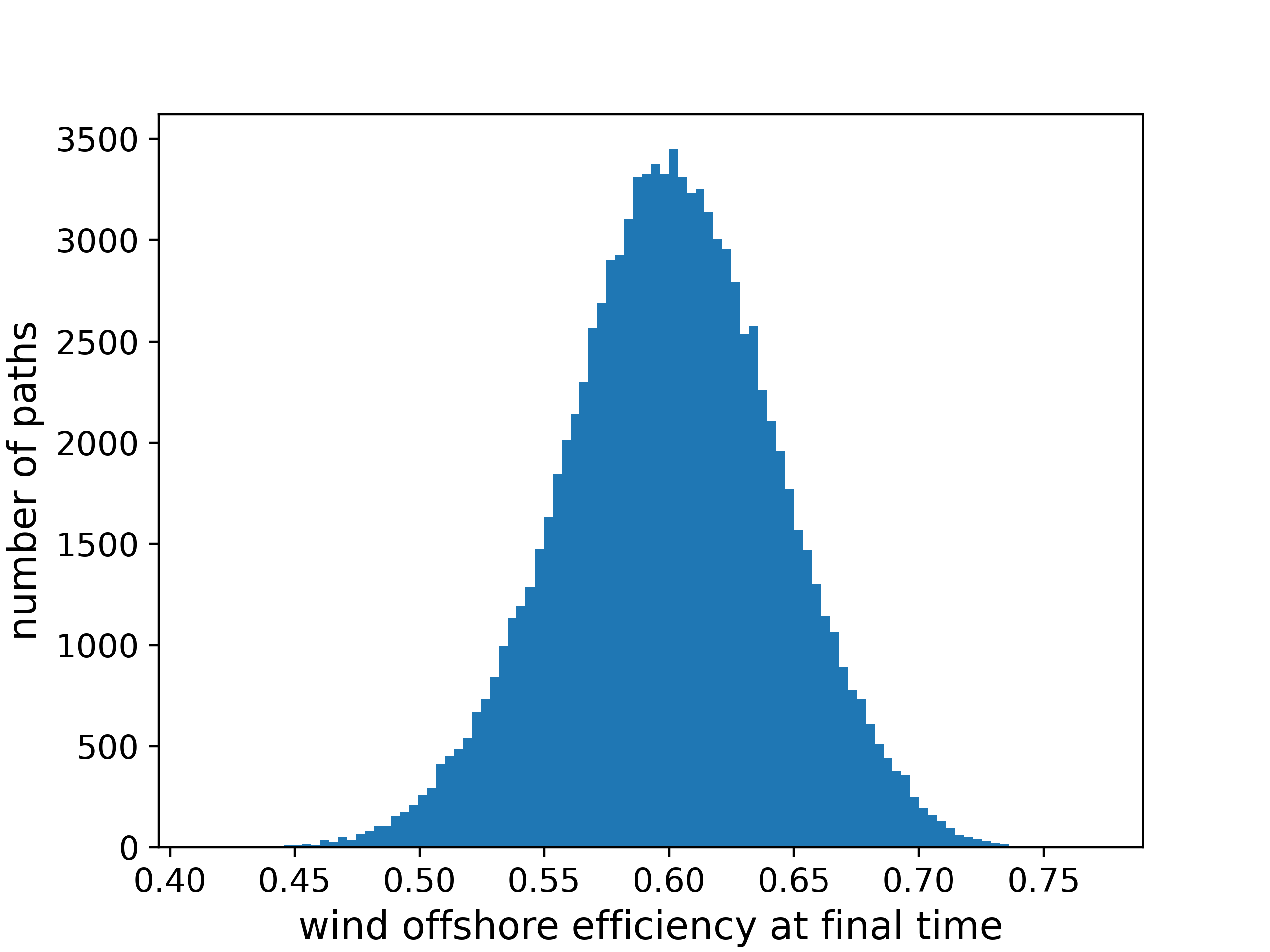}
    
    \caption{Histogram and scatter-plots for power price and wind efficiencies at the final time $T=48h$.}
    \label{fig:model}
\end{figure}
We consider a toy model for just two different renewables: Wind onshore, denoted by $Q_1$, and wind offshore, denoted $Q_2$. Solar radiation and also a finer resolution with respect to locations can easily be incorporated into the framework. If not stated otherwise the process parameters to model $p(t,T)$
are given by 
\begin{itemize}
    \item idiosyncratic component: $f(0,T)=100.0$, $\kappa^P=0.5$, $\sigma^P=0.8$,
    \item wind onshore: $Q_1(0,T)=0.5$, $\kappa_1=0.1$, $\sigma_1=3.0$, $w_1=0.91$,
    \item wind offshore: $Q_2(0,T)=0.6$, $\kappa_1=0.1$, $\sigma_1=3.0$, $w_2=0.09$,
\end{itemize}
and the correlation $\langle dW_1,dW_2\rangle=0.46$. As the sigmoid function in  \definitionref{def_quantities} used to obtain the efficiencies $Q_i$ we choose the logistic function
\[
    \varsigma(x) := \frac{1}{1+e^x}.
\]
For the approximation of the expected value in equation (\ref{eq:ForwardRelation}) we are using a combination of call prices (as described in \subsectionref{subsection_quantities}) We use 20 strikes uniformly distributed between -5 and 5. 
\figureref{fig:model} provides some intuition to this. It shows the histogram of power prices and wind efficiencies as well as their scatter plot. As we see, the final price is widely distributed in a range between 60 and 140. 

We consider a simple case of a Green PPA with fixed price $K$ with a delivery according to $Q_1$ over a one-hour period, i.e. the payoff $V(T)$ at expiry $T$ is given by
\[
    V(T) = c ~ Q_1(T,T) (f(T,T) - K),
\]
where $c>0$ is the plant's maximal capacity (set to 1 by default) which is multiplied by the efficiency $Q_1(T,T)$. The expiry $T$ is 48 hours in the future, $T$=48h. If not stated otherwise, the fixed PPA strike is $K = 100$ which equals the current forward price and we therefore say that the contract is at-the-money (ATM). For ease of presentation we will omit the final time $T$ in the following. We also remark, that, in reality, the delivery period of our toy model, i.e. one single hour, is of course not a tradable hedge product. As mentioned before, real delivery periods will be a topic of future research.

\paragraph{State Space}
We include the current timepoint $t$, the forward price $p_t(T)$ and the weather forecasts $Q_1(t,T)$ and $Q_2(t,T)$ into the state space, i.e. the state space is a four-dimensional vector defined by
\[
    X_t := \left(t,f(t,T),Q_1(t,T), Q_2(t,T) \right).
\]
Note that one may also include the current quantity held in the hedge instruments which is, in our case, the forward contract for the respective hour. This might be particularly important if we consider trading costs. In that case, the cost for adjusting a hedge at a certain timepoint depends on the residual of the old amount of the hedge and the new hedge quantity and therefore may influence the strategy itself. However, in this work we assume the absence of trading costs and therefore do not include current hedge quantities. More complex setups will be subject of a following paper.

\paragraph{Network Architecture and Training}
We conducted a variety of experiments with different network architectures and hyperparameters together with different network initialisations. The method seems to be quite robust with respect to these parameters and, if not stated otherwise, we are using a simple three-layer feedforward neural network with 64 neurons per layer and SELU (\emph{Scaled Exponential Linear Unit}) activation functions. We use the ADAM (\emph{Adaptive Moment Estimation}) optimiser with 800 epochs and batch size of 2000. We use a set of 100,000 paths for one training epoch. We apply an inverse time decay learning rate schedule, i.e.  learning rate $l(n)$ at the $n$-th optimiser step is  determined by
\[
l(n) = \frac{l_0}{1 + \alpha \cdot floor(\frac{n}{n_{step}})}
\]
with initial learning rate $l_0=2e-5$, $\alpha=0.2$, $n_{step}=4000$.

Discussion of network parameters and further methods to make the algorithm more robust will be the subject of a second paper.

\paragraph{Statistical Arbitrage}
Identifying hedging strategies with Machine Learning algorithms is prone to the phenomenon of \emph{statistical arbitrage}. This occurs if the algorithm "learns" to exploit technical properties of the models involved to create risk-free gains. We refer to the detailed discussion on this topic in \citet{buehler_2022}. We have applied our algorithm to a setup without the Green PPA, that is hedging an empty position. The resulting PnL was zero and the strategy learned was exactly the no-hedge strategy. This indicates that our model does not allow for statistical arbitrage.

\subsection{Results}
\label{subsection_results}

We investigate the hedge performance for two different risk measures: The expected shortfall (ES, see \definitionref{def_ES}) for the $5\%$ level and for the $30\%$ level. As benchmark strategies we also present results for the following three strategies: firstly, the trivial case of not hedging the PPA at all. Secondly, a static volume hedge which consists of selling the forecasted infeed quantity in the beginning. Thirdly, we benchmark against a dynamic volume hedge which consists of holding the quantity forecasted for the PPA at every time step. \tableref{tab:PnLResults} shows different statistics of the PnL (Profit and Loss) for the respective hedging strategies over 100,000 simulated paths, evaluated for a PPA starting at-the-money (forward price equals fixed price). We use the the parameters as specified in \subsectionref{subsection_parameters}.

\begin{table}[h]
    \centering
    \footnotesize
\begin{tabular}{lrrrrrr}
\toprule
{}                   &    Mean &    Variance &  p-Skewness &  1\% ES  &   5\% ES & 30\% ES \\
Hedge Strategy       &         &        &             &          &          &         \\
\midrule
No Hedge             & -0.3999 & 5.2794 &     -0.1609 & -15.7860 & -12.0629 & -6.6802 \\
Static Volume Hedge  & -0.3914 & 0.6039 &     -1.1283 &  -3.5152 &  -2.3222 & -1.0851 \\
Dynamic Volume Hedge & -0.3903 & 0.3979 &     -0.8189 &  -2.3414 &  -1.6025 & -0.8516 \\
30\% ES              & -0.3905 & 0.2934 &     -0.4860 &  -1.6320 &  -1.1889 & -0.7326 \\
5\% ES               & -0.3905 & 0.3013 &     -0.3993 &  -1.5778 &  -1.1759 & -0.7399 \\
\bottomrule
\end{tabular}
\caption{Statistics of the final PnL distribution for different hedging strategies computed over 100,000 paths.}
    \label{tab:PnLResults}
\end{table}

Firstly, we see that all hedging strategies have nearly the same PnL mean, although the hedging methods differ greatly in the other moments and statistics. As all strategies are self-financing the equivalent means come as no surprise. Different means for different strategies would constitute a statistical arbitrage opportunity as one could always go long in one strategy and short in another. Also, the mean for all strategies is slightly negative which stems from the cannibalisation effect. The rationale for this is as follows: If the forecasted quantities of the PPA are smaller today than they were yesterday, one would also likely experience larger prices. Then, one would have to rebuy previously hedged quantities at a higher price. The same logic applies for the opposite scenario of increased forecast quantities. Hence, both scenarios lead to additional costs for re-adjusting the hedge.

\begin{figure}
    \centering
    \includegraphics[width=0.9\textwidth]{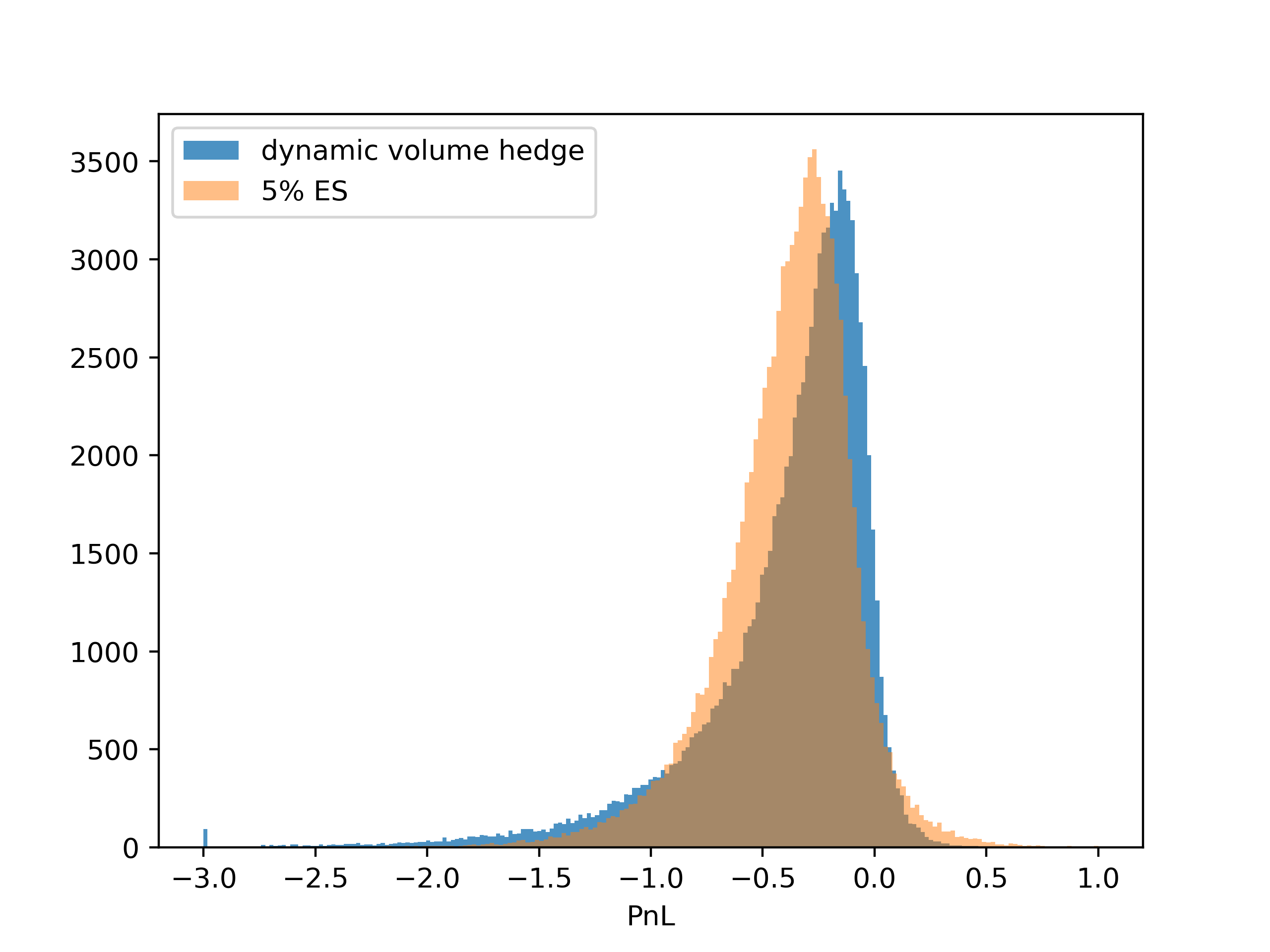} 
    \caption{Resulting PnL distributions for dynamic volume hedging and deep hedging using the 5\%-ES with 100.000 paths.}
    \label{fig:pnl_dist_utility_es}
\end{figure}

A good hedge should reduce variance. Thus, it is obvious that the case for which no hedging is done has by far the greatest variance when compared to all other hedges. The static volume hedge shows a massive decrease in variance but is still significantly worse than the more sophisticated strategies. The Deep Hedging approaches show the smallest variance and are also better than the dynamic volume hedge.

Negative skewness is another feature of the PnL-distribution that shows the cannibalisation effect (as described in the above example). The no-hedge case exhibits the lowest skewness (risks and chances almost symmetrical). With a hedge, while we offset price risks, we also have a two-way risk with respect to the (non-traded) quantity of the PPA. This results in larger negative skewness. Again, we see the Deep Hedge clearly outperforms the benchmarks.

\figureref{fig:pnl_dist_utility_es} shows the histogram of the PnL-distributions of the dynamic volume hedge and the 5\%-ES strategy, again for 100,000 simulated paths. One can see all of the above arguments about mean, variance and skewness. 

Coming back to \tableref{tab:PnLResults}, the difference between the dynamic volume hedge and Deep Hedging becomes larger if we consider the 5\% and 30\% expected shortfalls of the PnL-distribution. Note that the results of the Deep Hedging algorithm using the 5\% and 30\% ES as optimisation targets are very similar. Although the strategy calibrated to the 5\% ES provides slightly better results than the strategy calibrated for the 30\% ES for this risk measure (and vice versa), the difference is probably not significant. We remark that more complicated stochastic models would most likely increase the magnitude and significance of the difference between the two target risk measures. This is a topic of future research.

\begin{figure}[ht]
    \centering
    \includegraphics[width=0.85\textwidth]{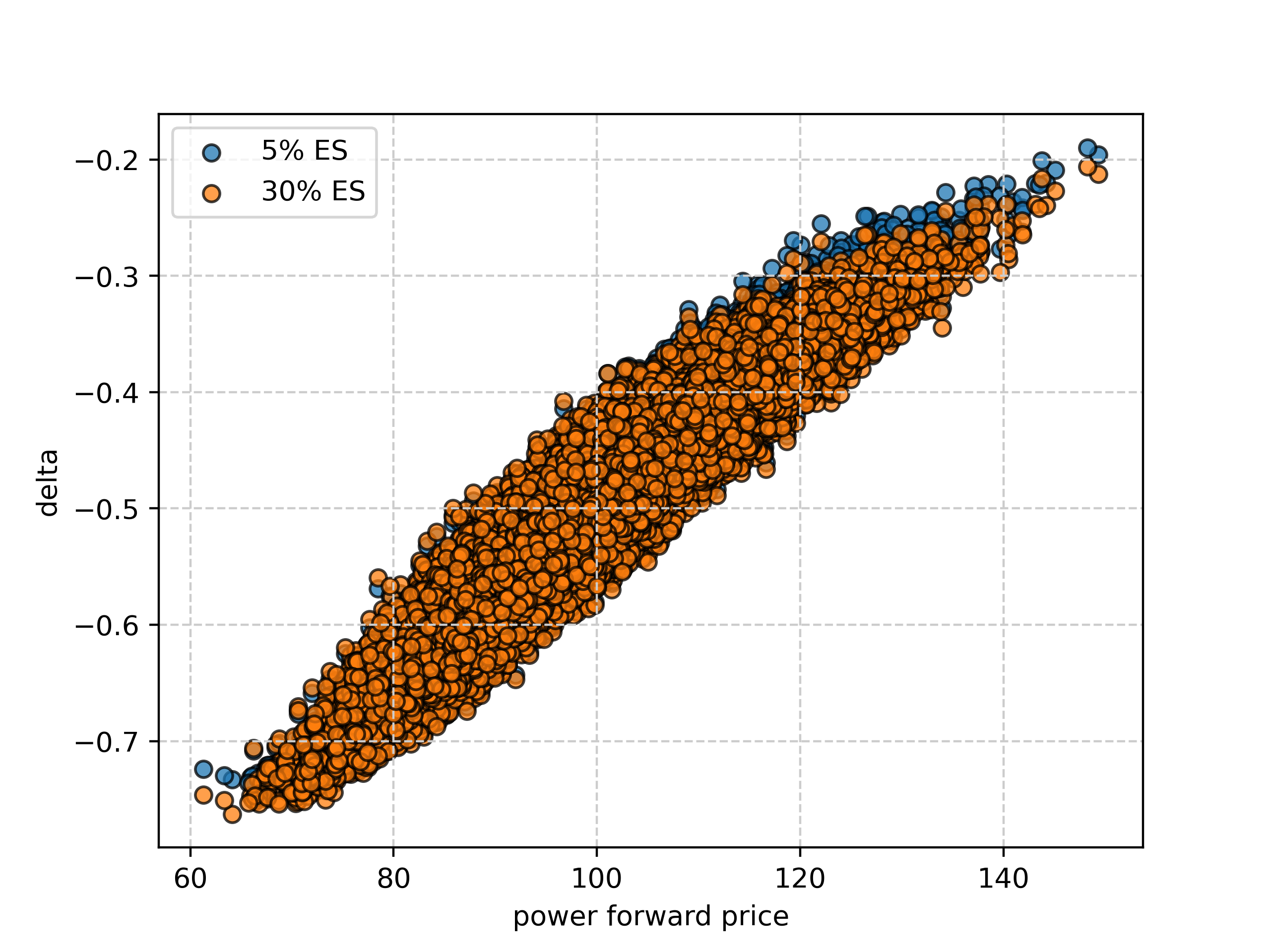}
    \caption{Case at-the-money: Delta versus price one hour before expiry on all test paths. This is the at-the-money case where PPA price equals the forward price of 100.}
    \label{fig:delta_vs_price_ATM}
\end{figure}

\begin{figure}[ht]
    \centering
    \includegraphics[width=0.85\textwidth]{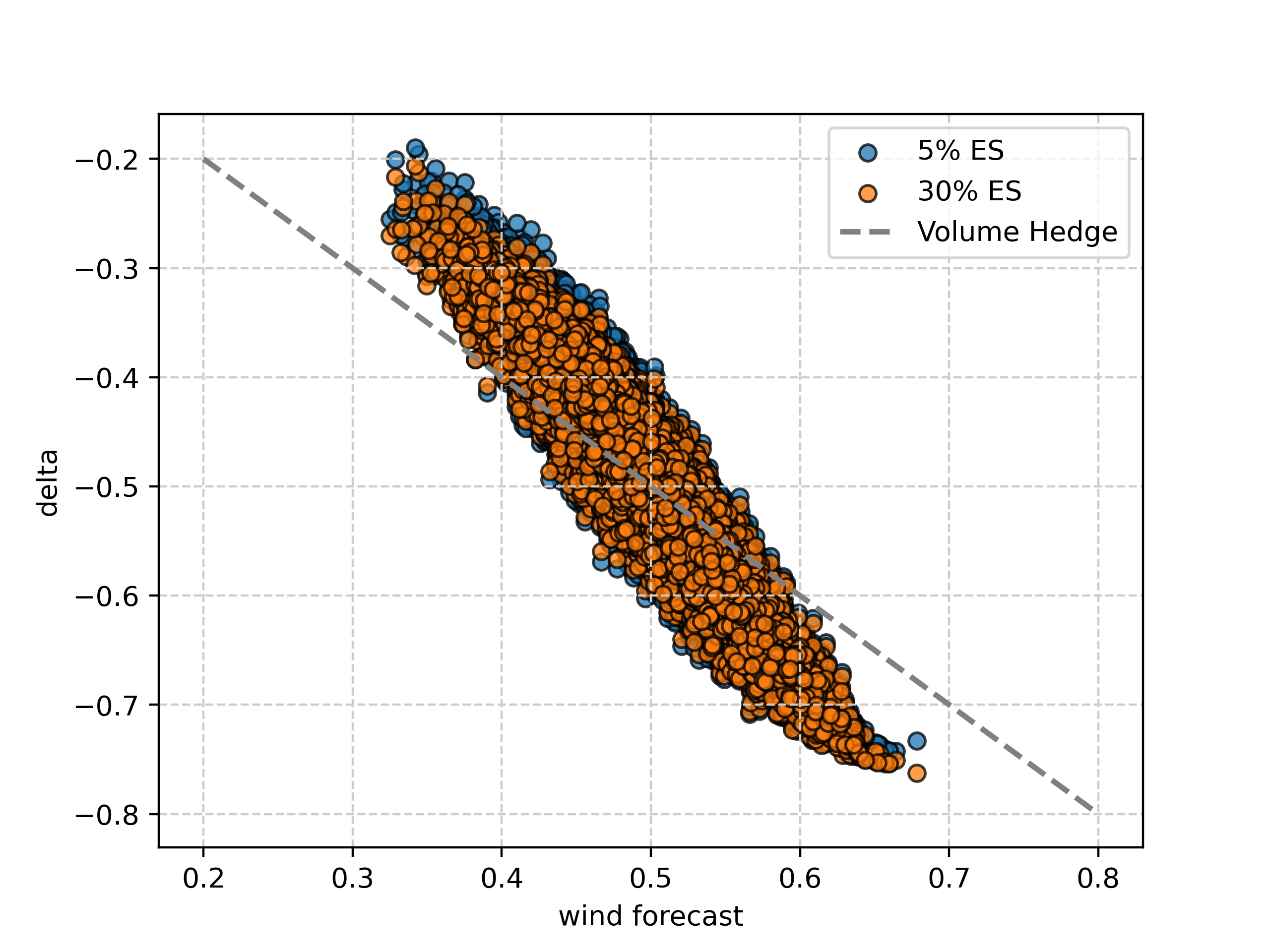}
    \caption{Case at-the-money: Delta wind forecast one hour before expiry on all test paths. This is the at-the-money case where PPA price equals the forward price of 100.}
    \label{fig:delta_vs_wind_ATM}
\end{figure}

So, how and when does the Deep Hedge achieve it's significant reduction in variance and skewness when compared to the benchmark volume hedge? In \figureref{fig:delta_vs_price_ATM} and \figureref{fig:delta_vs_wind_ATM} the delta (i.e. the hedging strategy $\delta$ as introduced in \sectionref{section_deepHedging}) is plotted against the power price as well as wind quantities. Here, we look at 100,000 paths and with one remaining hour until expiry of the PPA.

Both, the Deep Hedge minimising the 5\% (in orange) as well as that minimising the 30\% ES (in blue) suggest very similar hedge quantities. One can also see that the delta depends both on the current forward price (in \figureref{fig:delta_vs_price_ATM}) but also on the quantity forecast in (in \figureref{fig:delta_vs_wind_ATM}). This is, of course, in contrast to the volume hedge (denoted by the diagonal dashed line in \figureref{fig:delta_vs_wind_ATM}) for which just the forecasted quantity influences the hedging strategy. On average, we sell more power than the forecasted volume for those forecasts greater than 0.5 and less than the forecasted amount in the other cases. Here, the Deep Hedge tries to benefit from its learned knowledge of the cannibalisation effect and the relationship between prices and quantities in our model.

\begin{figure}[ht]
    \centering
    \includegraphics[width=0.49\textwidth]{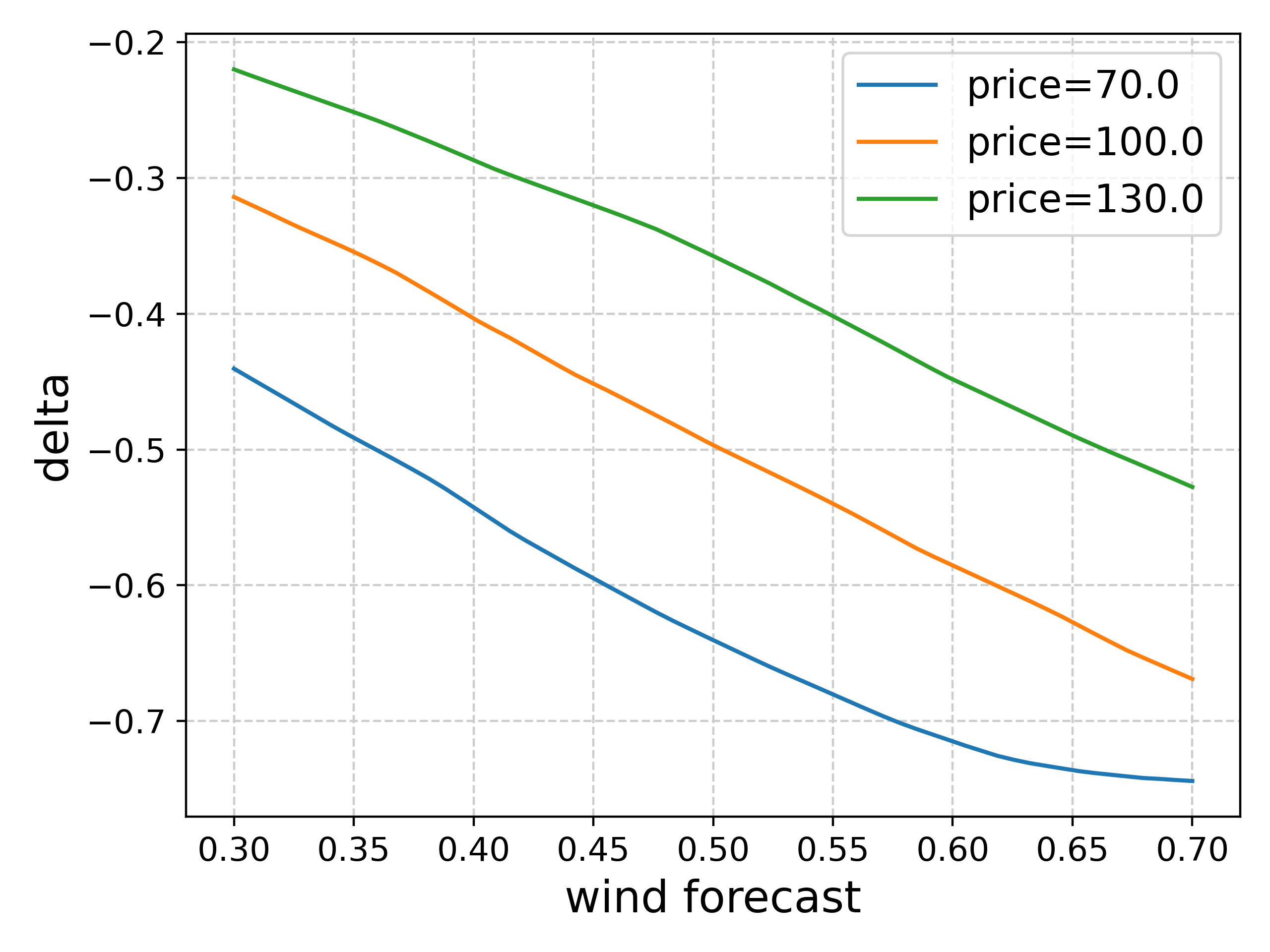}
    \includegraphics[width=0.49\textwidth]{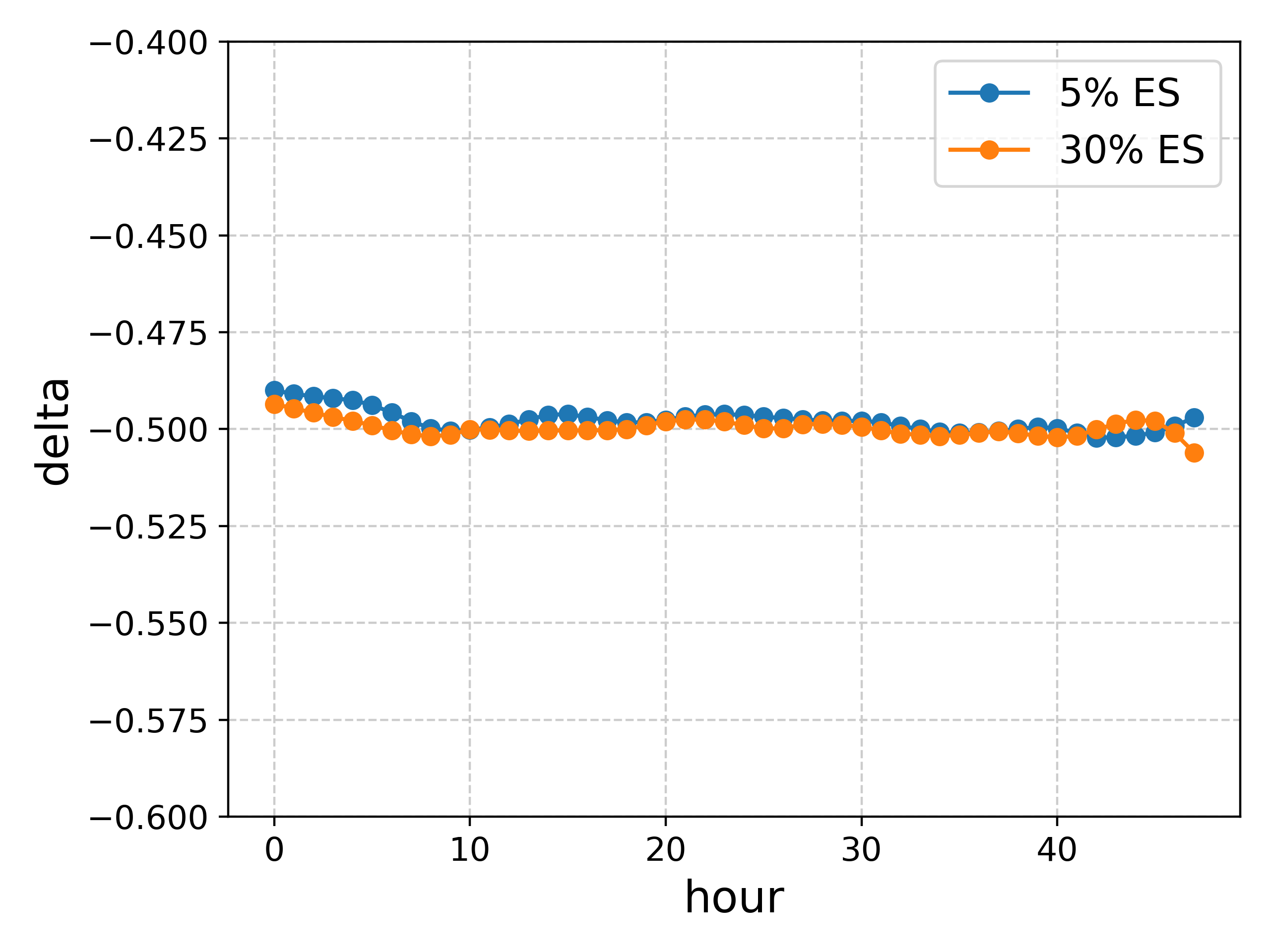}
    \caption{Delta for fixed prices and varying wind forecasts (for 5\% ES optimised strategy) (left) and mean of delta over all test paths for the hedging strategies trained to minimise 5\% and 30\% ES.  }
    \label{fig:mean_delta}
\end{figure}

This behaviour of the Deep Hedge is even more pronounced if the PPA has changed its moneyness, i.e. its fixed price is smaller than the forward price (in-the-money) or the other way around (out-of-the-money). The Deep Hedge then trades even more upside potential against a reduced downside: if the Green PPA is out-of-the-money it is better to sell more power than forecasted. This becomes clear in the left panel of \figureref{fig:mean_delta} where we plot the delta for different fixed forward price levels and varying wind forecasts. Nevertheless, as can be seen in the right panel of \figureref{fig:mean_delta}, the mean of the delta over all paths equals a delta very close to 0.5 (i.e. the initial forecast) at all timepoints and for both strategies.

\begin{figure}[ht]
    \centering
    \includegraphics[width=0.85\textwidth]{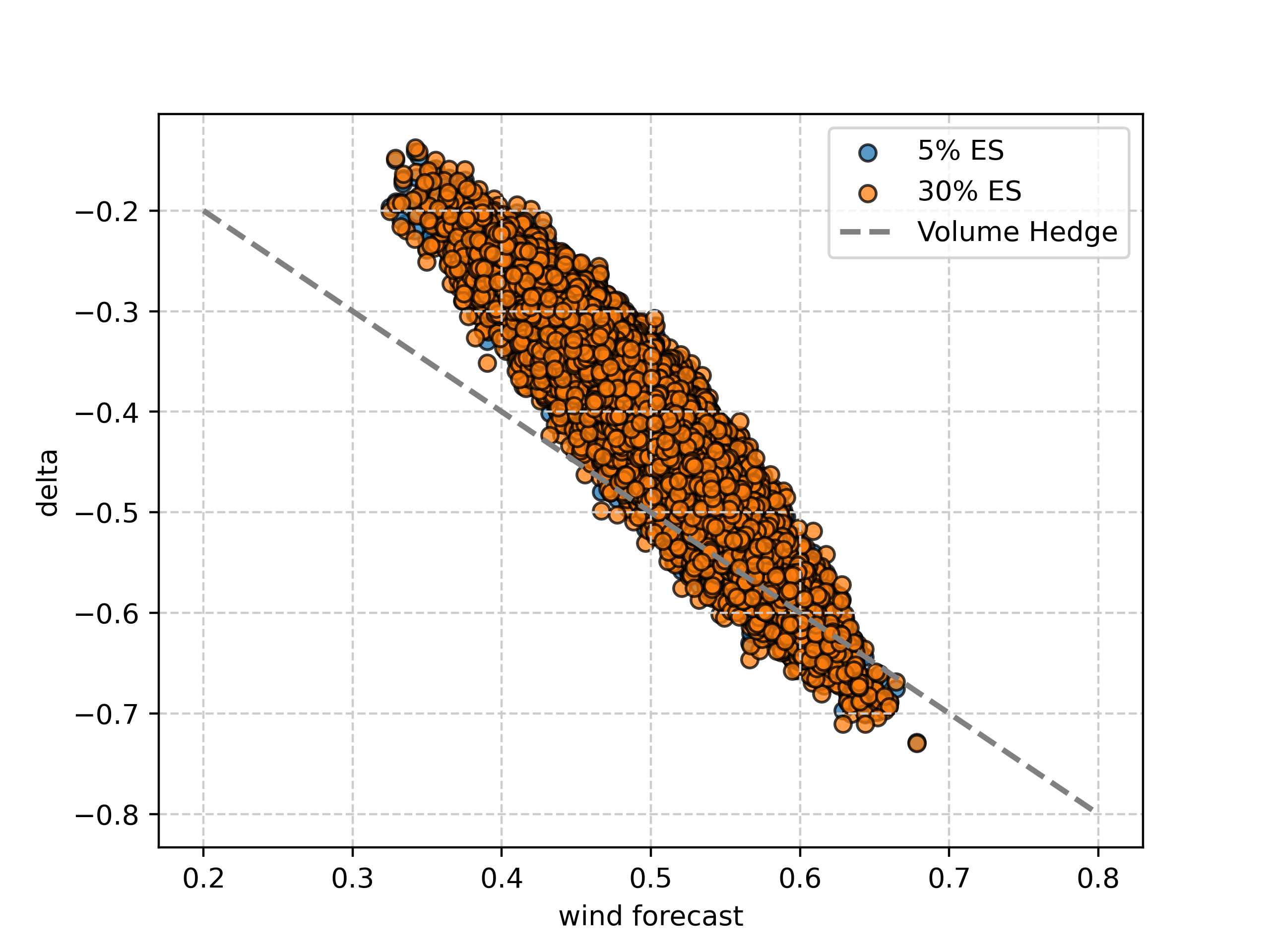}
    \caption{Case in-the-money: Delta and wind forecast one hour before expiry on all test paths. Here, the PPA price equals equal 80 compared to a forward price of 100.}
    \label{fig:delta_vs_price_wind_ITM}
\end{figure}

\figureref{fig:delta_vs_price_wind_ITM} and \figureref{fig:delta_vs_price_wind_OTM} again show the resulting hedge quantities for one remaining hour until delivery and 100,000 paths - but for the different cases of moneyness. \figureref{fig:delta_vs_price_wind_ITM} shows the in-the-money case and we can see how the Deep Hedge sells predominantly less quantity than the volume hedge. Thus, one does not lock-in gains on the price in order not to lose on the quantity risk. The opposite takes place in the out-of-the-money case as can be seen in \figureref{fig:delta_vs_price_wind_OTM}.

\begin{figure}[ht]
    \centering
    \includegraphics[width=0.85\textwidth]{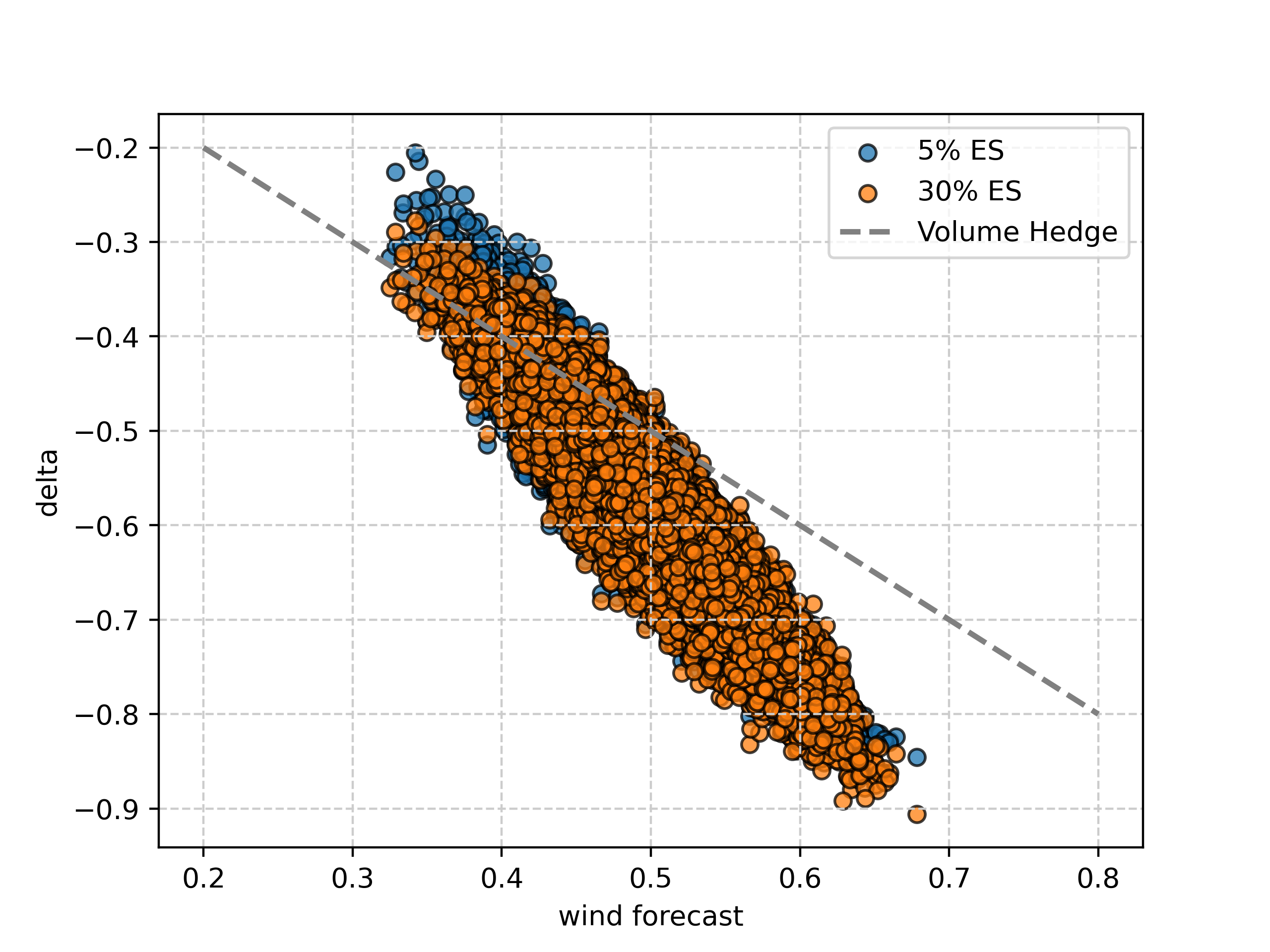}
    \caption{Case out-of-the-money: Delta versus wind forecast one hour before expiry on all test paths. Here the PPA price equals 120 compared to the forward price of 100.}
    \label{fig:delta_vs_price_wind_OTM}
\end{figure}

\section{Conclusion and Outlook}
\label{section_conclusion}
    In this study, we have presented a methodology for managing risks associated with positions in Green Power Purchase Agreements (PPAs). Green PPAs play an important role in the energy transition in liberalised energy markets. They are key in bringing renewable energy to the market and as such help in the fight against climate change.
    
    Energy traders usually take the side offering fixed prices for electricity in Green PPAs, thus securing steady cashflows for investors and stable prices for consumers of green electricity. In this role, they are highly exposed to market and weather risks as well as their negative interaction (the \emph{cannibalisation effect}). Thus, it is of the utmost importance for them to design good hedging strategies. The difficulty here is that trading Green PPAs takes place in a an incomplete market. While price risks can be mitigated by trading in the electricity market, this is not the case for the quantity risk. This depends on the weather (i.e. solar radiation or wind) which is not a traded entity. This challenge of finding effective hedging strategies has been the topic of our research.
    
    Our approach integrates a model for power forward prices with a stochastic framework for forecasting renewable energy infeeds. Specifically, infeed forecasts are modelled as the expected values derived from a logit-transformed Ornstein-Uhlenbeck process. Due to the absence of closed-form solutions, we approximate these expectations using a combination of call option prices.
    
    To optimise the trading strategy, we propose employing a neural network that minimises a predefined risk measure over the set of admissible strategies. This risk measure defines the individual risk appetite of the agent. While our analysis primarily focused on the expected shortfall as the risk criterion, the methodology is easily adapted to other risk measures, as well as utility functions.
    
    The results indicate that our approach yields superior risk-adjusted outcomes compared to a conventional dynamic volume hedge. The Machine Learning algorithms applied show stable training and prediction behaviour. Furthermore, making use of some intuition and \emph{domain knowledge} about electricity markets, the decisions suggested by the resulting hedge strategies can be \emph{interpreted} and make economical sense.
    
    Although transaction costs and bid-ask spreads were not explicitly incorporated in this study, their inclusion is straightforward within the proposed framework.
    
    Despite the promising results, our modelling approach remains relatively simplistic. Future research will further evaluate the adequacy of the proposed infeed and price models and explore potential refinements. This increase in complexity would, in turn, further justify the use of Machine Learning algorithms. Additionally, future research will extend the methodology to more complex settings: these include multi-asset portfolios, longer trading horizons as well as the incorporation of delivery periods.

\bibliography{References}

\appendix

\section{Proof of \propositionref{prop_approx_call}}
\label{section_appendix_ouCall}

    We calculate the price of a call option on an underlying modelled by a OU-process with constant speed and vola and time-dependent level. Its dynamics and solution thus are:
    \begin{align*}
	    dX_t &= \kappa (\mu(t) - X_t) dt + \sigma dW_t \\
	    X_s &= X_t e^{-\kappa(s-t)} + \int^s_t e^{-\kappa(s-u)} \kappa \mu(u) du + \int^s_t e^{-\kappa(s-u)} \sigma dW_u
    \end{align*}

    The variance of the OU process is:
    \begin{align*}
    	\bar{\sigma}(t,T) = \Var(X_T | \mathcal{F}_t) = \sigma^2 \Ef\left[\int^T_t e^{-2\kappa(T-u)} du | \mathcal{F}_t\right] = \sigma^2 \frac{1}{2 \kappa} \left(1 - e^{-2\kappa(T-t)}\right)
    \end{align*}
    
    The call at time $t$ on $X_t$ with strike $K$ and expiry $T$ has fair value:
    \begin{align*}
    	C(t,T) = \Ef[(X_T - K)^+ | \mathcal{F}_t] &= 	\Ef[(X_T - K) \mathbbmss{1}_{X_T > K} | \mathcal{F}_t]
    \end{align*}

    The event of the above indicator function can be written as:
    \begin{align*}
    	X_T > K &\Rightarrow K <  X_t e^{-\kappa(T-t)} + \int^T_t e^{-\kappa(T-u)} \kappa \mu(u) du + \int^T_t e^{-\kappa(T-u)} \sigma dW_u \\
    	&\Rightarrow \bar{\sigma}(t,T) Z > K - X_t e^{-\kappa(T-t)} - \int^T_t e^{-\kappa(T-u)} \kappa \mu(u) du \\
    	&\Rightarrow Z > \frac{-\left(X_t e^{-\kappa(T-t)} + \int^T_t e^{-\kappa(T-u)} \kappa \mu(u) du  - K\right)}{\bar{\sigma}(t,T)} = -d
    \end{align*}
    where $Z \sim \mathcal{N}(0,1)$ and we used Ito's isometry. Going back to the payoff function, we now split the expectation into two parts:
    \begin{align*}
    	C(t,T) &= \underbrace{\Ef[X_T \mathbbmss{1}_{X_T > K} | \mathcal{F}_t]}_A - \underbrace{\Ef[K \mathbbmss{1}_{X_T > K} | \mathcal{F}_t]}_B
    \end{align*}

    We calculate both parts in turn. The second part is straightforward:
    \begin{align*}
	    B = K ~ \Pf(Z \geq -d) = K \Phi(d)
    \end{align*}
    For the first part we introduce another helper variable 
    \begin{align*}
        g(t,T) = X_t e^{-\kappa(T-t)} + \int^T_t e^{-\kappa(T-u)} \kappa \mu(u) du
    \end{align*}
    We then calculate:
    \begin{align*}
    	A &= \Ef\left[\left(g(t,T) + \bar{\sigma} Z\right) \mathbbmss{1}_{X_T > K} | \mathcal{F}_t\right] \\
    	&= g(t,T) \Phi(d) + \bar{\sigma} ~ \Ef\left[Z \mathbbmss{1}_{Z > -d} | \mathcal{F}_t\right]\\
    	&= g(t,T) \Phi(d) + \bar{\sigma} ~  \int^\infty_{-d} x \phi(x) dx \\
    	&= g(t,T) \Phi(d) + \bar{\sigma} ~ \phi(-d) \\
    	&= g(t,T) \Phi(d) + \bar{\sigma} ~ \phi(d)
    \end{align*}

    Collecting terms, we arrive at:
    \begin{align*}
        C(t,T) &= (g(t,T) - K) ~ \Phi(d) + \bar{\sigma} ~ \phi(d)
    \end{align*}
    which is the desired result.

\end{document}